\def\llncs{0}
\def\fullpage{1}
\def\anonymous{0}
\def\authnote{1}
\def\notxfont{0}
\def\submission{0}
\def\cameraready{0}
\def\anonymous{1}
\def\llncs{1}
\def\authnote{0}
\def\submission{1}
\def\llncs{1}
\def\authnote{0}
\def\anonymous{0}
\def\llncs{1}
\def\authnote{0}
\definecolor{darkblue}{rgb}{0,0,0.6}
\definecolor{darkgreen}{rgb}{0,0.5,0}
\definecolor{maroon}{rgb}{0.5,0.1,0.1}
\definecolor{dpurple}{rgb}{0.2,0,0.65}
\newtheoremstyle{thicktheorem}%
{\topsep}
{\topsep}
{\itshape}{}%
{\bfseries}%
{.}
{ }%
{\thmname{#1}\thmnumber{ #2}%
		\thmnote{ (#3)}%
}
\newtheoremstyle{remark}
{\topsep}
{\topsep}
	{}
	{}
	{}
	{.}
	{ }
	{\textit{\thmname{#1}}\thmnumber{ #2}
			\thmnote{ (#3)}%
	}
	\theoremstyle{thicktheorem}
	\newtheorem{theorem}{Theorem}[section]
	\newtheorem{lemma}[theorem]{Lemma}
	\newtheorem{corollary}[theorem]{Corollary}
	\newtheorem{proposition}[theorem]{Proposition}
	\newtheorem{definition}[theorem]{Definition}
	\newtheorem{game}[theorem]{Game}
	\theoremstyle{remark}
	\newtheorem{remark}[theorem]{Remark}
	\crefname{theorem}{Theorem}{Theorems}
	\crefname{assumption}{Assumption}{Assumptions}
	\crefname{construction}{Construction}{Constructions}
	\crefname{corollary}{Corollary}{Corollaries}
	\crefname{conjecture}{Conjecture}{Conjectures}
	\crefname{definition}{Definition}{Definitions}
	\crefname{exmaple}{Example}{Examples}
	\crefname{experiment}{Experiment}{Experiments}
	\crefname{counterexample}{Counterexample}{Counterexamples}
	\crefname{lemma}{Lemma}{Lemmata}
	\crefname{observation}{Observation}{Observations}
	\crefname{proposition}{Proposition}{Propositions}
	\crefname{remark}{Remark}{Remarks}
	\crefname{claim}{Claim}{Claims}
	\crefname{fact}{Fact}{Facts}
	\crefname{note}{Note}{Notes}
 \crefname{appendix}{App.}{Appendices}
 \crefname{section}{Sec.}{Sections}
\renewcommand*{\backref}[1]{}
	\renewcommand*{\backref}[1]{(Cited on page~#1.)}
\newcommand{\mor}[1]{}
\newcommand{\taiga}[1]{}
\newcommand{\ryo}[1]{}
\newcommand{\takashi}[1]{}
\newcommand{\mor}[1]{$\ll$\textsf{\color{red} Tomoyuki: { #1}}$\gg$}
\newcommand{\taiga}[1]{$\ll$\textsf{\color{magenta} Taiga: { #1}}$\gg$}
\newcommand{\takashi}[1]{$\ll$\textsf{\color{orange} Takashi: { #1}}$\gg$}
\newcommand{\ryo}[1]{$\ll$\textsf{\color{darkgreen} Ryo: { #1}}$\gg$}
\newcommand{\StateGen}{\mathsf{StateGen}}
\newcommand{\Samp}{\algo{Samp}}
\newcommand{\la}{\leftarrow}
\newcommand{\ra}{\rightarrow}
\newcommand{\seteq}{\coloneqq}
\newcommand{\cA}{\mathcal{A}}
\newcommand{\cB}{\mathcal{B}}
\newcommand{\cC}{\mathcal{C}}
\newcommand{\cD}{\mathcal{D}}
\newcommand{\cL}{\mathcal{L}}
\newcommand{\cM}{\mathcal{M}}
\newcommand{\cQ}{\mathcal{Q}}
\newcommand{\cR}{\mathcal{R}}
\newcommand{\cS}{\mathcal{S}}
\newcommand{\cT}{\mathcal{T}}
\newcommand{\cX}{\mathcal{X}}
\newcommand{\cZ}{\mathcal{Z}}
\def\makeuppercase#1{
\expandafter\newcommand\csname tl#1\endcsname{\widetilde{#1}}
}
\def\makelowercase#1{
\expandafter\newcommand\csname tl#1\endcsname{\widetilde{#1}}
}
\newcommand{\N}{\mathbb{N}}
\newcommand{\R}{\mathbb{R}}
\newcommand{\secp}{\lambda}
\newcommand*{\algo}[1]{\ensuremath{\mathsf{#1}}}
\newcommand{\compclass}[1]{\textbf{\textrm{#1}}}
\newenvironment{boxfig}[2]{\begin{figure}[#1]\fbox{\begin{minipage}{0.97\linewidth}
                        \vspace{0.2em}
                        \makebox[0.025\linewidth]{}
                        \begin{minipage}{0.95\linewidth}
            {{
                        #2 }}
                        \end{minipage}
                        \vspace{0.2em}
                        \end{minipage}}}{\end{figure}}
\newcommand{\bit}{\{0,1\}}
\newcommand{\keygen}{\algo{KeyGen}}
\newcommand{\Vrfy}{\algo{Vrfy}}
\newcommand{\negl}{{\mathsf{negl}}}
\newcommand{\poly}{{\mathrm{poly}}}
\let\oldvec\vec
\let\vec\oldvec
\renewcommand*\l@author[2]{}
\renewcommand*\l@title[2]{}
\theoremstyle{remark}
\title{
\textbf{
Computational Complexity of Learning Efficiently Generatable Pure States
} 
}
\begin{document}

\ifnum\anonymous=1
\author{\empty}\institute{\empty}
\else
%
%
\ifnum\llncs=1
\index{Taiga, Hiroka}
\author{
	Taiga Hiroka\inst{1} 
}
\institute{
	Yukawa Institute for Theoretical Physics, Kyoto University, Japan  \and NTT Corporation, Tokyo, Japan
}
\else
%
%

\author[1]{Taiga Hiroka}
\author[2]{Min-Hsiu Hsieh}
\affil[1]{{\small Yukawa Institute for Theoretical Physics, Kyoto University, Kyoto, Japan}\authorcr{\small taiga.hiroka@yukawa.kyoto-u.ac.jp}}
\affil[2]{{\small Hon Hai Research Institute, Taipei, Taiwan}\authorcr{\small min-hsiu.hsieh@foxconn.com}}
\renewcommand\Authands{, }
\fi 
\fi

\ifnum\llncs=1
\date{}
\else
\date{\today}
\fi

\maketitle
\ifnum\llncs=0
\thispagestyle{fancy}
\rhead{YITP-24-126}
\else
\fi

\begin{abstract}
\if0
In this work, we study the computational complexity of learning pure states generated by quantum polynomial-time algorithms.
In our learning scenario, a learner $\cA$ is given a quantum polynomial-time algorithm $\cC$, which takes $1^n$, and $x\in\cX_n$ and outputs a quantum state $\ket{\psi_x}$, and many copies of $\ket{\psi_x}$ with an unknown $x\in\cX_n$.
We say that $\cA$ can learn $\cC$ if, for all $x\in \cX_n$, $\cA$ can find a $h\in\cX_n $ such that $\ket{\psi_h}$ approximates $\ket{\psi_x}$. 
In previous works, it is known that for all quantum polynomial-time algorithms $\cC$, we can learn $\cC$ given only $\poly(n)$-copies of $\ket{\psi_x}$ in the information-theoretical sense.
However, currently, we do not know whether we can achieve the task efficiently even if $\mathbf{P=PSPACE}$.
In this work, we show that for all quantum polynomial-time algorithms $\cC$ with pure state outputs, there exists a language $\cL\in\mathbf{PP}$ and a quantum polynomial-time algorithm $\cA^\cL$ that can learn $\cC$.
\fi

Understanding the computational complexity of learning efficient classical programs in various learning models has been a fundamental and important question in classical computational learning theory.
In this work, we study the computational complexity of quantum state learning, which can be seen as a quantum generalization of distributional learning introduced by Kearns et.al [STOC94].
Previous works by Chung and Lin [TQC21], and Bădescu and O’Donnell [STOC21] study the sample complexity of the quantum state learning and show that polynomial copies are sufficient if unknown quantum states are promised efficiently generatable. 
However, their algorithms are inefficient, and the computational complexity of this learning problem remains unresolved.

In this work, we study the computational complexity of quantum state learning when the states are promised to be efficiently generatable.
We show that if unknown quantum states are promised to be pure states and efficiently generateable, then there exists a quantum polynomial time algorithm $\cA$ and a language $\cL\in\mathbf{PP}$ such that $\cA^{\cL}$ can learn its classical description.
We also observe the connection between the hardness of learning quantum states and quantum cryptography.
We show that the existence of one-way state generators with pure state outputs is equivalent to the average-case hardness of learning pure states.
Additionally, we show that the existence of EFI implies the average-case hardness of learning mixed states.

\end{abstract}

\if0
\begin{abstract}

\end{abstract}
\fi

\ifnum\cameraready=1
\else
\ifnum\llncs=1
\else
\newpage
  \setcounter{tocdepth}{2}      
  \setcounter{secnumdepth}{2}   
  \setcounter{page}{1}          
  \thispagestyle{empty}
  \clearpage
\fi
\fi

\section{Introduction}


Since Valiant~\cite{ACM:Valaint84} formalized the notion of PAC learning, understanding the computational complexity of learning efficient classical programs in various learning models has been a fundamental and important question in classical computational learning theory.
It is shown that we can learn arbitrary efficient classical programs in the PAC learning model if we can solve an arbitrary problem in $\mathbf{NP}$~\cite{IPL:BEHW87,TCS:BP92,ML:Schapire90}.
The $\mathbf{NP}$ hardness of learning efficient classical programs in the \emph{proper} PAC learning model is also shown~\cite{ACM:PV88}.
While showing the $\mathbf{NP}$ hardness of \emph{improper} PAC learning has been an open question, several works~\cite{STOC:DLS14,COLT:DS16,COLT:Vadhan17,COLT:DV21} show its hardness assuming the specific average-case hardness assumptions of $\mathbf{NP}$ (See also \cite{FOCS:HN21,FOCS:Hir22}).
Understanding the computational complexity of learning efficient classical programs is also crucial for constructing classical cryptographic primitives.
In fact, it is known that the existence of fundamental classical cryptographic primitives such as one-way functions (OWFs) is equivalent to the average-case hardness of learning efficient classical programs in the various learning models~\cite{FOCS:ImpLub90,C:BFKL93,ACM:NR06,CCC:OS17,ACM:NE19,FOCS:HirNana23}.
Furthermore, it has been shown that assuming the hardness of specific classical learning problems, such as the Learning with Errors (LWE) problem, various useful classical cryptographic protocols can be constructed~\cite{JACM:Regev09}.
Given the importance of understanding the computational complexity of learning efficient classical programs, it is an interesting direction to study computational complexity of learning efficient quantum processes in the various learning models.

\if0
Since Valiant~\cite{ACM:Valaint84} formalized the notion of learning, understanding the computational complexity of learning efficient classical programs in the various learning models has been a fundamental and important question in classical computational learning theory.
It is known that an arbitrary efficient classical program can be properly PAC-learned if and only if $\mathbf{NP}$ can be efficiently solved~\cite{IPL:BEHW87,TCS:BP92,ML:Schapire90,ACM:PV88}.
Understanding the computational complexity of learning efficient classical programs is also crucial for constructing classical cryptographic protocols.
In fact, it is known that the existence of fundamental classical cryptographic primitives such as one-way functions (OWFs) is equivalent to the average-case hardness of learning efficient classical programs in the various learning models~\cite{FOCS:ImpLub90,C:BFKL93,ACM:NR06,CCC:OS17,ACM:NE19,FOCS:HirNana23}.
Furthermore, it has been shown that assuming the hardness of specific classical learning problems, such as the Learning with Errors (LWE) problem, various useful classical cryptographic protocols can be constructed~\cite{JACM:Regev09}.
Given the importance of understanding the computational complexity of learning efficient classical programs, it is natural to ask computational complexity of learning efficient quantum processes in the various learning models.
\fi

In this work, we study the computational complexity of quantum state learning, which can be seen as a quantum generalization of distributional learning introduced by \cite{ACM:KMRRSS94}.
In quantum state learning, a learner is given many copies of an unknown quantum state which is promised within some family of quantum states, and the task is to find a hypothetical quantum circuit $h$ whose output is close to the given unknown quantum state.
When there is no promise on the input states,
it is known that an exponential number of copies relative to the number of qubits is required~\cite{ACM:OJ16,ACM:HHJWY16}.
On the other hand, because many quantum states of interest can be efficiently generated, it is natural to ask the sample and computational complexities of quantum state learning with the promise that an unknown quantum state is efficiently generatable.
In previous works, Chung and Lin~\cite{TQC:KH21}, and B\v{a}descu and O'Donell~\cite{STOC:CR21} independently showed that polynomial copies are sufficient when an unknown quantum state is promised to be efficiently generatable.
However, their algorithms are not time-efficient, and it is natural to ask the computational complexity of the learning task.

Currently, it is known that if there exist pseudorandom state generators (PRSGs)~\cite{C:JiLiuSon18}, then there exist pure states that can be generated by quantum polynomial-time algorithms, but hard to learn its description~\cite{arXiv:ZLKQHC23,arXiv:CLS24}.
This implies that it seems difficult to learn arbitrary quantum states generated by quantum polynomial-time algorithms even if we can query $\mathbf{QMA}$ oracle because it is known that PRSGs exist relative to an oracle $\mathcal{O}$ such that $\mathbf{BQP}^{\mathcal{O}}=\mathbf{QMA}^{\mathcal{O}}$~\cite{TQC:Kre21}.
On the other hand, we do not know whether a QPT algorithm can learn quantum states generated by quantum polynomial-time algorithms even if the algorithm can query $\mathbf{PSAPCE}$ oracle.
In particular, in quantum state learning, instances are themselves quantum states, and hence it is unclear how to reduce quantum state learning to traditional classical languages, where instances are represented by classical strings.
Therefore, we believe that the following is an interesting open question:
\begin{center}
\emph{Which oracle does help us to learn quantum states generated by an arbitrary quantum polynomial-time algorithm?}
\end{center}

We also ask connection between quantum state learning and quantum cryptography.
In the classical case, there exists an equivalence between the average-case hardness of learning in the various learning models and the existence of cryptography~\cite{FOCS:ImpLub90,C:BFKL93,ACM:NR06,CCC:OS17,ACM:NE19,FOCS:HirNana23}.
For example, it is known that the average-case hardness of distirutional learning is equivalent to the existence of OWFs~\cite{FOCS:HirNana23}.
It is natural to wonder whether a similar relationship holds in the quantum case.
Therefore, we ask
\begin{center}
\emph{Is the average-case hardness of learning quantum states equivalent to the existence of quantum cryptographic primitives?}
\end{center}

\color{black}

\if0
\taiga{Introduction is Under preparation...}

\color{red}
In an influential paper by Impagliazzo, he proposes the five possible worlds. Pessiland is the most pessimistic world among them. 
In Pessiland, there is no complexity-based classical cryptography, and no generic heuristic classical algorithm that solves NP on average.
Unfortunately, we cannot rule out Pessiland from the five possible worlds, and our world, which we live in, might be Pessiland.

It would be natural to ask if we are in Pessiland, are there really no positive aspects?
In Pessiland, various learning tasks can be solved efficiently as positive aspects, which is shown by using the non-existence of one-way functions.
\taiga{I will write the line of research constructing learning algorithms assuming the non-existence of cryptography.}
Blum Furst, Kearns, and Liptons show that we can do PAC learning for P/poly circuits on average assuming the non-existence of OWFs.
Naor and Rothblum show how to achieve learning adaptively changing distributions assuming the non-existence of OWFs.
Recently, Nanashima and Hirahara showed how to achieve agnostical learning and distributional learning in the average-case regime assuming the non-existence of OWFs. 
Given these positive aspects of the non-existence of OWFs, they put forward ``Learabilica'' as the alternative name of Pessiland.

Although we know the positive aspects of the non-existence of complexity-based ``classical'' cryptography (OWFs), we do not know much about the positive aspects of the non-existence of complexity-based ``quantum'' cryptography.
Even in the quantum world, information-theoretical security is impossible to achieve for many interesting quantum cryptographic primitives, and currently, we do not know how to prove the existence of interesting quantum cryptographic primitives unconditionally.
Among them, a one-way state generator is one of the most fundamental primitives in complexity-based quantum cryptography.
Currently, we do not know how to prove the existence of OWSGs unconditionally, and thus there is a possibility that our world, which we live in, might be a pessimistic world, where a one-way state generator does not exist.
In the classical world, there are positive aspects of the non-existence of complexity-based cryptography.
Therefore, it would be natural to ask
\begin{center}
\textit{
Does there exist a positive aspect of the non-existence of complexity-based quantum cryptography?
}
\end{center}
\taiga{I think that I need to mention the results of ``Unitary Complexity and the Uhlmann Transformation Problem''. 
Roughly speaking, their results can be seen stating that the non-existence of EFI implies the positive aspects for compressing quantum data and high energy physics.
}
\color{black}
\fi

\subsection{Our Results}

In this work, we study two questions above.
Our contributions are summarized as follows.

\begin{table}[h!]
\centering
\begin{tabularx}{\textwidth}{|X|X|X|}
\toprule
\textbf{Types of Problem} &  \multicolumn{2}{c|}{\textbf{Computational Complexity}} \\
\cmidrule(lr){2-3} 
 & \textbf{Upper Bound} & \textbf{Lower Bound} \\
\midrule
Learning Pure States generated in Polynomial Time & 
$\mathbf{PP}$~(\cref{thm:Learning_w_Oracle}) & OWSGs with Pure State Outputs \\
\hline
Average-Case Learning Pure States generated in Polynomial Time & OWSGs with Pure State Outputs~(\cref{thm:owsg_a_h_learn}) & OWSGs with Pure State Outputs~(\cref{thm:owsg_a_h_learn}) \\
\hline
Learning Mixed State generated in Polynomial Time & N/A & EFI~(\cref{thm:Hardness_from_EFI}) \\
\bottomrule
\end{tabularx}
\caption{Computational Complexities in Quantum State Learning}
\label{table}
\begin{tablenotes}
    \item 
    In column "Computational Complexity", the upper bound means that the learning tasks can be achieved with the listed class of algorithms, and
    the lower bound means that the listed class of algorithms is necessary to achieve the learning task.
    $\mathbf{PP}$ stands for QPT algorithm with PP oracle, OWSGs with pure state outputs stand for QPT algorithm which can query an algorithm that breaks any OWSGs with pure state outputs, and EFI stands for QPT algorithm which can query an algorithm that breaks any EFI.
\end{tablenotes}
\end{table}

\if0
\begin{table}[h!]
\centering
\begin{tabularx}{\textwidth}{|X|X|X|X|X|}
\toprule
\textbf{Types of Problem} & \multicolumn{2}{c|}{\textbf{Sample Complexity}} & \multicolumn{2}{c|}{\textbf{Computational Complexity}} \\
\cmidrule(lr){2-3} \cmidrule(lr){4-5}
 & \textbf{Upper Bound} & \textbf{Lower Bound} & \textbf{Upper Bound} & \textbf{Lower Bound} \\
\midrule
Learning arbitrary Quantum States & $O(d^2/\epsilon^2)$\cite{ACM:OJ16,ACM:HHJWY16} & $\Omega(d^2/\epsilon^2)$\cite{ACM:OJ16,ACM:HHJWY16} & Exponential Space (Folklore) & Exponential Space (Folklore) \\
\hline
Learning Pure States generated in Polynomial Time & 
\large $O\left(\frac{\log(\abs{\cX})+\log(1/\delta)}{\epsilon^2}\right)$\allowbreak\normalsize\cite{TQC:KH21} & 
\large $ \Omega\left((1-\delta)\frac{\log(\abs{\cX})}{\epsilon^2}\right)$\allowbreak\normalsize\cite{TQC:KH21} & PP~(\cref{thm:Learning_w_Oracle}) & OWSGs with Pure State Outputs \\
\hline
Average-Case Learning Pure States generated in Polynomial Time & \large $ O\left(\frac{\log(\abs{\cX})+\log(1/\delta)}{\epsilon^2}\right)$\allowbreak\normalsize\cite{TQC:KH21} & N/A & OWSGs with Pure State Outputs~(\cref{thm:owsg_a_h_learn}) & OWSGs with Pure State Outputs~(\cref{thm:owsg_a_h_learn}) \\
\hline
Learning Mixed State generated in Polynomial Time & 
\large $ O\left(\frac{\log^3(\abs{\cX})}{\epsilon^2}\right)$\allowbreak\normalsize\cite{STOC:CR21} & \large $\Omega\left((1-\delta)\frac{\log(\abs{\cX})}{\epsilon^2}\right)$\allowbreak\normalsize\cite{TQC:KH21} & Exponential Space (Folklore) & EFI~(\cref{thm:Hardness_from_EFI}) \\
\bottomrule
\end{tabularx}
\caption{Sample and Computational Complexities in Quantum State Learning}
\label{table}
\begin{tablenotes}
    \item In column “Sample Complexity”, $d$ stands for the dimension of unknown quantum state, $\delta$ stands for the failure probability of the learner, $\epsilon$ stands for the accuracy of the learner's approximation, and $\cX$ stands for the number of candidates for unknown quantum states.
    In column "Computational Complexity", the upper bound means that the learning tasks can be achieved with the listed class of algorithms, and
    the lower bound means that the listed class of algorithms is necessary to achieve the learning task.
    Exponential space stands for exponential space algorithm, PP stands for QPT algorithm with PP oracle, OWSGs with pure state outputs stand for QPT algorithm which can query an algorithm that breaks any OWSGs with pure state outputs, and EFI stands for QPT algorithm which can query an algorithm that breaks any EFI.
\end{tablenotes}
\end{table}
\fi

\begin{enumerate}
    \item 
    In \cref{sec:learning_model}, we formally define a quantum state learning model,
    which can be seen as a quantum generalization of the distributional learning model~\cite{ACM:KMRRSS94}.
    The distributional learning model can be categorized into two flavors: proper distributional learning and improper distributional learning.
    In this work, we focus on the proper one and generalize it to a quantum setting.
    In our proper quantum state learning model, as a setup, a learner $\cA$ is given a description of quantum polynomial-time algorithm $\cC$, which takes as input $1^n$ and $x\in\cX_n=\bit^{\poly(n)}$ and outputs a quantum state $\psi_x$.
    Then, it receives polynomially many copies of $\psi_x$ generated by $\cC$.
    We say that the leaner $\cA$ can properly learn $\cC$ if, \emph{for all $x\in\cX_n$}, $\epsilon$ and $\delta$, it can find $y\in\cX_n$ such that $\mathsf{TD}(\psi_x,\psi_y)\leq1/ \epsilon$ with probability at least $1-1/\delta$, where $\psi_y\la\cC(1^n,y)$\footnote{
    The other possible learning model is improper quantum state learning.
    In the improper quantum state learning model, a learner's task is to find an arbitrary quantum algorithm $h$ whose output is close to $\ket{\psi_n}$ instead of finding $y\in\cX_n$ such that $\ket{\psi_y}$ is close to $\ket{\psi_x}$.
    Remark that if we can properly learn $\cC$, then we can also improperly learn $\cC$. 
    On the other hand, the other direction may not hold.
}.

\if0    
In \cref{sec:learning_model}, we formally define proper quantum state learning
\footnote{
The other possible learning model is improper quantum state learning.
In the improper quantum state learning model, a learner is given a description of quantum polynomial-time algorithm $\cC$, which takes as input $1^n$ and $x\in\cX_n=\bit^{\poly(n)}$ and outputs $\ket{\psi_x}$, and then receives polynomially many copies of $\ket{\psi_x}$ with unknown $x\in \cX_n$.
We say that the learner can improperly learn $\cC$ if it can output an arbitrary quantum algorithm $h$ whose outputs $\ket{\phi_h}$ approximating $\ket{\psi_x}$ instead of finding $y\in\cX_n$ such that $\ket{\psi_y}$ statistically approximates $\ket{\psi_x}$.
Remark that if we can properly learn $\cC$, then we can also improperly learn $\cC$. On the other hand, the other direction may not hold.
},
which can be seen as a quantum generalization of proper distributional learning introduced by \cite{ACM:KMRRSS94}.
In our proper quantum state learning model, as a setup, a learner $\cA$ is given a description of quantum polynomial-time algorithm $\cC$, which takes as input $1^n$ and $x\in\cX_n=\bit^{\poly(n)}$ and outputs $\ket{\psi_x}$.
Then, it receives polynomially many copies of $\ket{\psi_x}$ with unknown $x\in \cX_n$.
We say that the leaner $\cA$ can properly learn $\cC$ if, \emph{for all $x\in\cX_n$}, $\epsilon$ and $\delta$, it can find $y\in\cX_n$ such that $\mathsf{TD}(\ket{\psi_x},\ket{\psi_y})\leq1/ \epsilon$ with probability at least $1-1/\delta$, where $\ket{\psi_y}\la\cC(1^n,y)$.
\fi

\item
In \cref{sec:quantum_state_learning}, we show that, for all quantum polynomial-time algorithms $\cC$ with pure state outputs, there exists a quantum polynomial-time algorithm $\cA$ and a language $\cL\in\mathbf{PP}$ such that $\cA^{\cL}$ can properly learn $\cC$ given $\poly(n,\log(\abs{\cX_n}),\epsilon,\log(\delta) )$-copies of $\ket{\psi_x}$ with unknown $x$~\footnote{Our QPT algorithm with $\mathbf{PP}$ oracle can be applied to improper learning.}.
This implies that, if a novel algorithm is found to solve any $\mathbf{PP}$ problems, then it will be useful to learn arbitrary pure states generated by any QPT algorithms.
Let us stress that our learning algorithm with $\mathbf{PP}$ oracle cannot be applied to learning mixed state generated by any QPT algorithms.

\item In \cref{sec:hardness}, we study the connection between quantum state learning and quantum cryptography.
We observe that the average-case hardness of learning pure states in quantum state learning model is equivalent to the existence of one-way state generators (OWSGs) with pure state outputs~\cite{Crypto:MY22}.
In OWSGs with pure state outputs, the adversary receives many copies of $\ket{\psi_x}$, where $x$ is randomly sampled and $\ket{\psi_x}$ is prepared by some QPT algorithm $\cC(x)$.
What the adversary needs to do is to find $h$ such that the measurement $\{\ket{\psi_h}\bra{\psi_h},I-\ket{\psi_h}\bra{\psi_h}\}$ on $\ket{\psi_x}$ yields $\ket{\psi_h}\bra{\psi_h}$ with high probability.
This is very similar to what the learner needs to do in our learning model.
We observe that we can break arbitrary OWSGs with pure state outputs if and only if we can learn pure states in the average-case version of our learning model.
Intuitively, in average-case learning, the learner needs to find $h$ such that $\ket{\psi_h}$ is close to $\ket{\psi_x}$ \emph{for a large fraction of $\cX_n$ instead of for all $x\in\cX_n$}.
Therefore, the learning requirement is relaxed compared to worst-case learning.

We also observe that the existence of EFI~\cite{Asia:Yan22,ITCS:BCQ23} implies that there are mixed states generated by quantum polynomial-time algorithms, which is hard to properly learn.
Because \cite{arXiv:LMW23} conjectured that EFI may exist relative to a random oracle and any classical oracle, our observation implies that any classical oracles seem not help to properly learn arbitrary mixed states generated by quantum polynomial-time algorithms~\footnote{Remark that the argument cannot be applied to the hardness of improperly mixed state learning, and thus there exists the possibility that we can improperly learn mixed states by querying classical oracles.}.
Therefore, this implies that QPT algorithms with classical oracles do not seem applicable to learning any efficiently generatable mixed states.
\end{enumerate}

The upper and lower bounds of the computational complexity of the quantum state learning problem are summarized in the \cref{table}.

\if0
\color{red}
In this work, we show that $\mathbf{PP}$ oracle is sufficient to learn a classical description of arbitrary pure states generated by quantum polynomial-time algorithms~(\cref{sec:quantum_state_learning}).
Let us explain our learning model more precisely.
We consider the proper quantum state learning model~
\footnote{
The other possible learning model is improper quantum state learning.
In the improper quantum state learning model, a learner is given a description of quantum polynomial-time algorithm $\cC$, which takes as input $1^n$ and $x\in\cX_n=\bit^{\poly(n)}$ and outputs $\ket{\psi_x}$, and then receives polynomially many copies of $\ket{\psi_x}$ with unknown $x\in \cX_n$.
We say that the learner can improperly learn $\cC$ if it can output an arbitrary quantum algorithm $h$ whose outputs $\ket{\phi_h}$ approximating $\ket{\psi_x}$ instead of finding $y\in\cX_n$ such that $\ket{\psi_y}$ statistically approximates $\ket{\psi_x}$.
Remark that if we can properly learn $\cC$, then we can also improperly learn $\cC$. On the other hand, the other direction may not hold.
},
which can be seen as a quantum generalization of proper distributional learning introduced by \cite{ACM:KMRRSS94}.
In our proper quantum state learning model, as a setup, a learner $\cA$ is given a description of quantum polynomial-time algorithm $\cC$, which takes as input $1^n$ and $x\in\cX_n=\bit^{\poly(n)}$ and outputs $\ket{\psi_x}$.
Then, it receives polynomially many copies of $\ket{\psi_x}$ with unknown $x\in \cX_n$.
We say that the leaner $\cA$ can properly learn $\cC$ if, \emph{for all $x\in\cX_n$}, $\epsilon$ and $\delta$, it can find $y\in\cX_n$ such that $\mathsf{TD}(\ket{\psi_x},\ket{\psi_y})\leq1/ \epsilon$ with probability at least $1-1/\delta$, where $\ket{\psi_y}\la\cC(1^n,y)$.
We show that, for all quantum polynomial-time algorithms $\cC$ with pure state outputs, there exists a quantum polynomial-time algorithm $\cA$ and a language $\cL\in\mathbf{PP}$ such that $\cA^{\cL}$ can learn $\cC$ given $\poly(n,\log(\abs{\cX_n}),\epsilon,\log(\delta) )$-copies of $\ket{\psi_x}$ with unknown $x$.

We also study the connection between quantum state learning and quantum cryptography~(\cref{sec:hardness}).
We observe that the average-case hardness of learning pure states in quantum state learning model is equivalent to the existence of one-way state generators (OWSGs) with pure state outputs~\cite{Crypto:MY22}.
In OWSGs with pure state outputs, the adversary receives many copies of $\ket{\psi_x}$, where $x$ is randomly sampled and $\ket{\psi_x}$ is prepared by some QPT algorithm $\cC(x)$.
What the adversary needs to do is to find $h$ such that the measurement $\{\ket{\psi_h}\bra{\psi_h},I-\ket{\psi_h}\bra{\psi_h}\}$ on $\ket{\psi_x}$ yields $\ket{\psi_h}\bra{\psi_h}$ with high probability.
This is very similar to what the learner needs to do in our learning model.
We observe that we can break arbitrary OWSGs with pure state outputs if and only if we can learn pure states in the average-case version of our learning model.
Intuitively, in average-case learning, the learner needs to find $h$ such that $\ket{\psi_h}$ is close to $\ket{\psi_x}$ \emph{for a large fraction of $\cX_n$ instead of for all $x\in\cX_n$}.
Therefore, the learning requirement is relaxed compared to worst-case learning.

We also would like to stress that our algorithm with $\mathbf{PP}$ oracle cannot be applied to mixed-state learning, and we believe that any classical oracle cannot help to learn arbitrary mixed-states generated by a QPT algorithm.
In \cref{sec:hardness}, we observe that the existence of EFI~\cite{Asia:Yan22,ITCS:BCQ23} implies that there are mixed states generated by quantum polynomial-time algorithms, which is hard to properly learn.
Because \cite{arXiv:LMW23} conjectured that EFI may exist relative to a random oracle and any classical oracle, our observation implies that any classical oracles seem not help to properly learn arbitrary mixed states generated by quantum polynomial-time algorithms~\footnote{Remark that the argument does not follow for the hardness of improperly mixed state learning, and thus there exists the possibility that we can improperly learn mixed states by querying classical oracles.}.
\color{black}
\fi

\subsection{Technical Overview}
In this section, we explain how a quantum polynomial-time algorithm with $\mathbf{PP}$-oracle can learn pure states generated by a quantum polynomial-time algorithm.

\paragraph{\emph{Our Learning Algorithm.}}
Our technical contribution is, given $\mathbf{PP}$ oracle, showing how to efficiently implement Chung and Lin's algorithm~(Given in Section4 in \cite{TQC:KH21}).
First, let us briefly recall their strategy in our language.
\begin{description}
    \item[Description of CL algorithm $\cA$:] $ $
    \begin{enumerate}
        \item As a setup, $\cA$ receives a quantum algorithm $\cC$ that takes $x\in\cX_n=\bit^{\poly(n)}$ and outputs $\ket{\psi_x}$.
        \item $\cA$ receives $\{\ket{\psi_{x}}_i\}_{i\in[T]}$ with sufficiently large $T$ and an unknown $x\in\cX_n$.
        \item For all $i\in[T]$, $\cA$ measures $\ket{\psi_x}_i$ with Haar random POVM measurements $\cM$, and obtains $z_i$.
        Note that this process is possibly inefficient.
        Let $\cC^*(\cdot)$ be a quantum algorithm that takes $x$, generates $\ket{\psi_x}$, measure it with $\cM$, and outputs the measurement outcome.
        \item $\cA$ outputs $h\in\cX_n$ such that 
        \begin{align}
\max_{x\in\cX_n}\{\Pi_{i\in[T]}\Pr[z_i\la\cC^*(x)]\}=\Pi_{i\in[T]}\Pr[z_i\la\cC^*(h)].
        \end{align}
        Note that this step is possibly inefficient.
    \end{enumerate}
\end{description}
Let us explain an intuitive reason why this algorithm works.
First, it is known that Haar random measurements do not decrease the trace distance between quantum states if measured states are pure states~\cite{Pranab06}.
From this property, the measurements in the third step of $\cA$ does not decrease the trace distance between quantum states.
In other words, there is a constant $c$ such that, for all $a,b\in\cX_n $,
\begin{align}
    \norm{\cC^*(a)-\cC^*(b)}_1\geq c \norm{\ket{\psi_a}-\ket{\psi_b}}_1.
\end{align}
This implies that, for finding $h$ such that $\norm{\ket{\psi_h}-\ket{\psi_x}}_1$ is small,
it is sufficient to find a $h\in\cX_n$, such that $\norm{\cC^*(h)-\cC^*(x)}_1$ is small, given polynomially many copies of $z\la\cC^*(x)$.
In the fourth step, $\cA$ finds a maximum-likelihood $h\in\cX_n$ such that $\cC^*(h)$ is most likely to output $\{z_i\}_{i\in[T]}$.
It is natural to expect that a maximum-likelihood algorithm $\cC^*(h)$ seems to approximate $\cC^*(x)$ well with high probability.
\cite{TQC:KH21} shows that if $T\geq O(\epsilon^2\cdot(\log(\abs{\cX_n})+\log(\delta)))$, then we have $\norm{\cC^*(h)-\cC^*(x)}_1\leq 1/\epsilon$ with probability $1-1/\delta$, where the probability is taken over $\{z_i\}_{i\in[T]}\la \cC^*(x)$.

Although their algorithm learns pure states, their algorithm does not work efficiently.
The first bottleneck is the third step of $\cA$.
In the third step, $\cA$ needs Haar random measurements, which is hard to implement.
Fortunately, it is shown that we have the same property if we implement $4$-design measurements instead of Haar random measurements~\cite{CCC:AE07}.
Therefore, we can efficiently implement the third step by running $4$-design measurements instead of Haar random measurements.
The second bottleneck is the fourth step of $\cA$.
In the fourth step, given $\{z_i\}_{i\in[T]}$, $\cA$ needs to find a maximum-likelihood $h\in\cX_n$ such that 
\begin{align}
\max_{x\in\cX_n}\{\Pi_{i\in[T]}\Pr[z_i\la\cC^*(x)]\}=\Pi_{i\in[T]}\Pr[z_i\la\cC^*(h)],
\end{align}
which we currently do not know how to implement efficiently.
In our work, we show that we can achieve an approximated version of the task by querying $\mathbf{PP}$ oracle. 
More precisely, we show that for arbitrary quantum polynomial-time algorithms $\cC^*$ with classical inputs and classical outputs, there exists a language $\cL\in\mathbf{PP}$ and an oracle-aided PPT algorithm $\cA^{\cL}$ that can find an approximated maximum-likelihood $h\in\cX_n$ such that
\begin{align}
\frac{\max_{x\in\cX_n}\left\{\Pi_{i\in[T]}\Pr[z_i\la\cC^*(x)]\right\}}{\left(1+1/\epsilon(n)\right)}\leq \Pi_{i\in[T]}\Pr[z_i\la\cC^*(h)]\leq \max_{x\in\cX_n}\left\{\Pi_{i\in[T]}\Pr[z_i\la\cC^*(x)]\right\},
\end{align}
where $\epsilon$ is arbitrary polynomial.

A subtle but important point is that an ``approximated'' maximum-likelihood algorithm $\cC^*(h)$ might not statistically approximate $\cC^*(x)$ even if we take $T$ sufficiently large.
Actually, we do not know how to prove an approximated maximum-likelihood algorithm $\cC^*(h)$ approximates $\cC^*(x)$ for sufficiently large $T$.
To overcome the issue, inspired by \cite{ML:AW92}, we slightly modify the learning algorithm $\cA$ as follows.
In a modified learning algorithm $\cB$, instead of $\cC^*(\cdot)$,
$\cB$ considers a quantum algorithm $\cC^*_{1/2}(\cdot)$, where $\cC^*_{1/2}(\cdot)$ takes as input $x\in\cX_n$, and outputs the output of $\cC^*(x)$ with probability $1/2$, and outputs a uniformly random string $z$ with probability $1/2$.
Given many copies of $\ket{\psi_x}_i$, $\cB$ measures $\ket{\psi_x}_i$ with Haar random measurement and sets the measurement outcome as $z_i$ with probability $1/2$, and uniform random string as $z_i$ with probability $1/2$. 
$\cB$ outputs $h$ such that
\begin{align}
\frac{\max_{x\in\cX_n}\left\{\Pi_{i\in[T]}\Pr[z_i\la\cC^*_{1/2}(x)]\right\}}{\left(1+1/\epsilon(n)\right)}\leq \Pi_{i\in[T]}\Pr[z_i\la\cC^*_{1/2}(h)]\leq\max_{x\in\cX_n}\left\{\Pi_{i\in[T]}\Pr[z_i\la\cC^*_{1/2}(x)]\right\}
\end{align}
\if0
instead of 
\begin{align}
\frac{\max_{x\in\cX_n}\left\{\Pi_{i\in[T]}\Pr[z_i\la\cC^*(x)]\right\}}{\left(1+1/\epsilon(n)\right)}\leq \Pi_{i\in[T]}\Pr[z_i\la\cC^*(h)]\leq \max_{x\in\cX_n}\left\{\Pi_{i\in[T]}\Pr[z_i\la\cC^*(x)]\right\}.
\end{align}
\fi
This modification makes it possible to prove that for sufficiently large $T$,  $\norm{\cC^*_{1/2}(x)-\cC^*_{1/2}(h)}_1$ is small with high probability.
Furthermore, we have
\begin{align}
   \norm{\cC^*_{1/2}(x)-\cC^*_{1/2}(h)}_1= \frac{1}{2} \norm{\cC^*(x)-\cC^*(h)}_1\geq c \norm{\ket{\psi_x}-\ket{\psi_h}}_1.
\end{align}
This means that $\ket{\psi_h}$ statistically approximates $\ket{\psi_x}$.
Therefore, the remaining part is showing how to find an approximated maximum-likelihood algorithm $\cC^*(h)$ given $\mathbf{PP}$ oracle.

\if0
\paragraph{\emph{Deterministic Algorithm with $\mathbf{NP^{PP}}$ for Quantum Maximum-Likelihood Problem.}}
In the rest of the section, our purpose is to explain how to achieve the following search version of a quantum maximum-likelihood problem by using $\mathbf{NP^{PP}}$ oracle.
\begin{description}
    \item[Search Version of Quantum Maximum-Likelihood Problem(SQMLH).]$ $ 
    \begin{itemize}
        \item[Input:]$ $
        A classical string $z$ and a quantum algorithm $\cC$, which takes as input $1^n$ and $x\in\cX_n=\bit^{\poly(n)}$, and outputs $z$.
        \item[Output:]
        Output $h\in\cX_n$ such that
        \begin{align}
            \frac{\max_{x\in\cX_n}\Pi_{i\in[T]}\Pr[z_i\la\cC(x)]}{\Pi_{i\in[T]}\Pr[z_i\la\cC(h)]}\leq \left(1+1/\epsilon(n)\right).            
        \end{align}
    \end{itemize}
\end{description}

If we can solve the following decision version of a quantum maximum-likelihood problem, then we can achieve the search one above by a simple search to decision reduction.
\begin{description}
    \item[Decision Version of Quantum Maximum-Likelihood Problem(DQMLH).]$ $
    \begin{itemize}
        \item[Input:] A classical string $z$, a quantum algorithm $\cC$ which takes as input $1^n$ and $\cX_n=\bit^{\poly(n)}$, a parameter $s$, and a polynomial $\epsilon$.
        \item[Yes Case:]
        There exists a $x\in\cX_n$ such that 
        \begin{align}
             \left(1+\frac{1}{\epsilon(n)}\right)^s \leq \Pr[z\la \cC(1^n,x)]\leq \left(1+\frac{1}{\epsilon(n)}\right)^{s-1}
        \end{align}
        \item[No Case:] 
        For all $x\in\cX_n$, 
        \begin{align}
              \Pr[z\la \cC(1^n,x)]<\left(1+\frac{1}{\epsilon(n)}\right)^s \,\,\,\mbox{or}\,\,\,
              \left(1+\frac{1}{\epsilon(n)}\right)^{s-1}<\Pr[z\la \cC(1^n,x)].
        \end{align}
    \end{itemize}
\end{description}

What we need to do is $\mathbf{DQMLH} \in\mathbf{NP^{PP}}$.
To show this, let us recall that the result of \cite{FR99}.
\cite{FR99} shows that there exists a classical polynomial-time algorithm $\cA^{\cL}$ and a language $\cL\in\mathbf{PP}$ with the following property.
$\cA^{\cL}$ takes as input a classical string $z$ and a quantum polynomial-time algorithm $\cC$, which outputs a classical string.
$\cA^{\cL}(z,\cC)$ outputs the value $\Pr[z\la \cC]$, which is the probability that $\cC$ outputs $z$.
By using this $\cA^{\cL}$, we can show that $\mathbf{DQMLH} \in\mathbf{NP^{PP}}$.
Our $\mathbf{NP^{PP}}$ algorithm $\cB^{\cL}(1^n,z,\cC,s,\epsilon,\cdot)$ takes as input a witness $x\in\cX_n$, and computes $\cA^{\cL}(z,\cC(1^n,x))$.
If 
\begin{align}
\left(1+\frac{1}{\epsilon(n)}\right)^s\leq \cA^{\cL}(z,\cC(1^n,x))\leq\left(1+\frac{1}{\epsilon(n)}\right)^{s-1},
\end{align}
then $\cB^{\cL}(1^n,z,\cC,s,\cdot)$ outputs $1$ indicating yes, and $0$ otherwise.
From the construction of $\cB^{\cL}$, if there exists $x\in\cX_n$ such that 
\begin{align}
        \left(1+\frac{1}{\epsilon(n)}\right)^s \leq \Pr[z\la \cC(1^n,x)]\leq \left(1+\frac{1}{\epsilon(n)}\right)^{s-1},
\end{align}
then $\cB^\cL$ outputs $1$.
Otherwise, $\cB$ outputs $0$.
This means that $\mathbf{DQMLH}\in\mathbf{NP^{PP}}$.
\fi

\paragraph{\emph{Probabilistic Polynomial-Time Algorithm with PP oracle for Quantum Maximum-Likelihood Problem.}}
In the rest of the section, we explain how to solve the following quantum maximum-likelihood problem using $\mathbf{PP}$ oracle.
\begin{description}
    \item[Quantum Maximum-Likelihood Problem (QMLH).]$ $ 
    \begin{itemize}
        \item[Input:]$ $
        A classical string $z$ and a quantum algorithm $\cC$, which takes as input $1^n$ and a classical string $x\in\cX_n=\bit^{\poly(n)}$, and outputs a classical string $z\in\cZ_n$. (In the following, we often omit $1^n$ of $z\la\cC(1^n,x)$.)
        \item[Output:]
        Output $h\in\cX_n$ such that
        \begin{align}
            \frac{\max_{x\in\cX_n}\{\Pr[z\la\cC(x)]\}}{\left(1+1/\epsilon(n)\right)}\leq \Pr[z\la\cC(h)]\leq \max_{x\in\cX_n}\{\Pr[z\la\cC(x)]\}.            
        \end{align}
    \end{itemize}
\end{description}
As explained later, a PPT algorithm $\cA$ with $\mathbf{PP}$ oracle can exactly simulate an arbitrary probability distribution generated by a quantum polynomial-time algorithm even if it does post-selection.
Therefore, it is sufficient to construct a quantum algorithm $\widehat{\cQ}[z]$ with post-selection, which solves the quantum maximum likelihood problem.
Let us explain how $\widehat{\cQ}[z]$ works.
\begin{enumerate}
    \item For sufficiently large $T$, $\widehat{\cQ}[z]$ generates the following quantum state
    \begin{align}
        U_Q\ket{0}_{W,Y_1V_1\cdots, Y_TV_T}&\seteq\frac{1}{\sqrt{\abs{\cX_n}}}\sum_{x\in\cX_n}\ket{x}_W\bigotimes_{i\in[T]} \ket{\cC(x)}_{Y_iV_i}\\
        &=
        \frac{1}{\sqrt{\abs{\cX_n}}}\sum_{x\in\cX_n}\ket{x}_W\bigotimes_{i\in[T]} \left(\sum_{y\in\cZ_n} \sqrt{\Pr[y\la \cC(x)]} \ket{y}_{Y_i}\ket{\phi_{x,y}}_{V_i} \right).
    \end{align}
    Here, we write $\ket{\cC(x)}_{Y_iV_i}$ to mean a quantum state generated by a quantum algorithm $\cC(x)$ before measuring the state with the computational basis, where the $Y_i$ register corresponds to the outcome register and the $V_i$ register corresponds to the auxiliary register.
    \item It post-selects all of $Y_1\cdots Y_T$ registers to $z$.
    Let $\ket{\psi[z]}_{WY_1V_1\cdots Y_TV_T}$ be the post-selected quantum state.
    \item 
    It measures the $W$ register of $\ket{\psi[z]}_{WY_1V_1\cdots Y_TV_T}$ in the computational basis, and outputs the measurement outcome $h$.
\end{enumerate}
From the construction of $\widehat{\cQ}[z]$,$\ket{\psi[z]}_{WY_1V_1\cdots Y_TV_T}$ satisfies
\begin{align}
    \ket{\psi[z]}_{WY_1V_1\cdots Y_TV_T}=\frac{1}{\sqrt{\sum_{x\in\cX_n}(\Pr[z\la\cC(x)]^T)}} \sum_{h\in\cX_n}\sqrt{\Pr[z\la\cC(h) ]^T}\ket{h}_W\bigotimes_{i\in[T]}\ket{z}_{Y_i}\ket{\phi_{h,z}}_{V_i}.
\end{align}
Therefore, it is natural to expect that, if we measure the $W$ register of $\ket{\psi[z]}_{WY_1V_1\cdots Y_TV_T}$, then, we obtain a measurement outcome $h$ such that $\Pr[z\la \cC(h)]$ is high.
We can prove the intuition by taking $T$ sufficiently large as follows.
If $h$ is not a good answer for the quantum maximum likelihood problem (i.e. $\Pr[z\la\cC(h)]<\frac{\max_{x\in\cX_n}\Pr[z\la\cC(x)]}{\left(1+1/\epsilon(n)\right)}$), then 
\begin{align}
    \Pr[h\la \widehat{\cQ}[z]]=\frac{\Pr[z\la\cC(h) ]^T}{\sum_{x\in\cX_n}\Pr[z\la\cC(x)]^T}\leq \frac{\Pr[z\la\cC(h) ]^T}{\max_{x\in\cX_n}\Pr[z\la\cC(x)]^T}<\left(\frac{1}{1+1/\epsilon} \right)^T.
\end{align}
This implies that 
\begin{align}
  \Pr_{h\la\widehat{\cQ}[z]}\left[ \Pr[z\la\cC(h)]<\frac{\max_{x\in\cX_n}\Pr[z\la\cC(x)]}{\left(1+1/\epsilon(n)\right)}\right] \leq \abs{\cX_n}\cdot\left(\frac{1}{1+1/\epsilon} \right)^T.
\end{align}
\color{black}
Therefore, by taking $T$ sufficiently large, $\widehat{\cQ}[z]$ outputs $h$ such that
\begin{align}
   \frac{\max_{x\in\cX_n}\Pr[z\la\cC(x)]}{1+1/\epsilon}\leq \Pr[z\la\cC(h)]\leq \max_{x\in\cX_n}\Pr[z\la\cC(x)]
\end{align}
with high probability.

Although $\widehat{\cQ}[z]$ solves the quantum maximum likelihood problem with high probability, it is possibly inefficient since it does post-selection.
In the following, we explain how a PPT algorithm $\cA$ with $\mathbf{PP}$ oracle exactly simulates the probability distribution of $\widehat{\cQ}[z]$.
As shown in \cite{FR99}, even a deterministic polynomial time algorithm $\cB$ with $\mathbf{PP}$ oracle can compute $\Pr[z\la\cC]$ for an arbitrary QPT algorithm $\cC$ and a classical string $z$.
Our PPT algorithm $\cA^{\mathbf{PP}}$ simulates $\widehat{\cQ}[z]$ by using $\cB^{\mathbf{PP}}$ as follows:
\begin{itemize}
    \item For each $i\in[\abs{x}]$, $\cA^{\mathbf{PP}}$ sequentially works as follows:
    \begin{enumerate}
        \item For each $b\in\bit$, it computes the probability $\Pr[b,x_{i-1},\cdots, x_1,z]$ and $\Pr[x_{i-1},\cdots,x_1,z]$.
        Here, 
        \begin{align}
            \Pr[x_{i-1},\cdots,x_1,z]\seteq \Tr(\left(\ket{x_{i-1}\cdots x_1}\bra{x_{i-1}\cdots x_1}_{W[i-1]}\bigotimes_{i\in[T]}\ket{z}\bra{z}_{Y_i}\right) U_\cQ\ket{0}_{W,Y_1V_1,\cdots ,Y_TV_T}),
        \end{align}
        where $W[i-1]$ is the register in the first $(i-1)$-bits of $W$.
        Note that, in this step, $\cA^{\mathbf{PP}}$ uses $\cB^{\mathbf{PP}}$ given in \cite{FR99}.
        \item 
        Set $x_i\seteq b$ with probability
        \begin{align}
            \frac{\Pr[b,x_{i-1},\cdots,x_1,z ]}{\Pr[x_{i-1},\cdots,x_1,z ]}.
        \end{align}
    \end{enumerate}
    \item Output $h\seteq x_{\abs{x}},\cdots,x_1 $.
\end{itemize}
From the construction of $\cA^{\mathbf{PP}}$, we have
\begin{align}
    \Pr[x\la \cA^{\mathbf{PP}}]\seteq \frac{\Pr[x_\abs{x},\cdots,x_1,z]}{\Pr[z]}.
\end{align}
for all $x\in\cX_n$.
From the definition, $\frac{\Pr[x_\abs{x},\cdots,x_1,z]}{\Pr[z]}$ is exactly equal to the probability that $\widehat{Q}[z]$ outputs $x$.
This means that the output distribution of $\cA^{\mathbf{PP}}$ is equivalent to that of $\widehat{Q}[z]$.

\subsection{More on Related works}\label{sec:related}
\paragraph{On Quantum Cryptography.}
Ji, Liu, and Song~\cite{C:JiLiuSon18} introduce a notion of PRSGs, and show that it can be constructed from OWFs.
Kretchmer~\cite{TQC:Kre21} shows that PRSGs exist relative to an oracle $\mathcal{O}$ such that $\mathbf{BQP}^{\mathcal{O}}=\mathbf{QMA}^{\mathcal{O}}$, and that PRSGs does not exist if its adversary can query $\mathbf{PP}$ oracle.
The subsequent works~\cite{arXiv:ZLKQHC23,arXiv:CLS24} observe that the existence of PRSGs implies that there exists a quantum state, which can be generated by quantum algorithms but hard to learn.
Morimae and Yamakawa~\cite{Crypto:MY22} introduce the notion of OWSGs, and show how to construct them from PRSGs.
In their original definition of OWSGs, the output quantum states are restricted to pure states, and its definition is generalized to mixed states by \cite{Eprint:MY22}.
In this work, we focus on the original pure-state version.
Based on \cite{arxiv:KT23}, Cavalar et. al.~\cite{arXiv:CGGHLP23} shows that OWSGs with pure state outputs can be broken if its adversary can query $\mathbf{PP}$ oracle.
As pointed out in our work, the existence of OWSGs with pure state outputs is equivalent to the average-case hardness of learning pure states generated by a quantum polynomial-time algorithm.
Therefore, to rephrase their results in terms of learning, a quantum polynomial-time algorithm with $\mathbf{PP}$ oracle can learn an arbitrary pure state generated by quantum polynomial-time algorithms \emph{in the average sense}.
Remark that their algorithm works only for average-case learning, and it is unclear whether their QPT algorithm with PP oracle can learn pure states in the worst-case sense.

Brakerski, Canetti, and Qian~\cite{ITCS:BCQ23} introduce the notion of EFI. (See also \cite{Asia:Yan22}).
It is shown that the existence of EFI is equivalent to that of quantum bit commitments, oblivious transfer, and multi-party computation~\cite{EC:GLSV21,C:BCKM21a,Asia:Yan22,ITCS:BCQ23}.
It is conjectured that EFI may exist relative to a random oracle and any classical oracle~\cite{arXiv:LMW23}.

\paragraph{On Classical Learning Theory.}
Valiant~\cite{ACM:Valaint84} introduced the PAC (probabilistically approximately correct) learning model to capture the notion of computationally efficient learning.
It is shown that PAC learning is equivalent to Occam learning~\cite{IPL:BEHW87,TCS:BP92,ML:Schapire90}, which can be formulated as a search problem in $\compclass{NP}$.
Inspired by Valiant's formalization,
Kearns et. al.~\cite{ACM:KMRRSS94} formalizes a distributional learning model.
Note that our quantum state learning model can be seen as a quantum generalization of the distributional learning model.
It is shown that a polynomial time learnability of a class of probabilistic automata $\cC$ with respect to Kullback–Leibler divergence is equivalent to that of the maximum likelihood problem for the same class $\cC$~\cite{ML:AW92}.

Valiant observed that the existence of pseudorandom functions, which is equivalent to the existence of OWFs~\cite{JACM:GolGolMic86,HILL99}, implies the hardness of learning for P/poly in the PAC model~\cite{ACM:Valaint84}.
Impagliazzo and Levin~\cite{FOCS:ImpLub90} initiate the study of the reverse direction, i.e. from the average-case hardness of learning to cryptography.
Subsequent work~\cite{C:BFKL93,ACM:NR06,CCC:OS17,ACM:NE19,FOCS:HirNana23} studies how to construct cryptographic primitives from the average-case hardness of learning in more general learning models.
Among them, our result (i.e. the equivalence between OWSGs and average-case hardness of learning pure states) can be seen as a quantum analog of \cite{FOCS:HirNana23} in the following sense.
In one of their results, they show the equivalence between the existence of OWFs and the average-case hardness of distributional learning~\cite{ACM:KMRRSS94}.
In our work, we observe that the existence of OWSGs with pure state outputs is equivalent to the average case hardness of learning pure states.

\if0
\color{red}
Valiant observed that the existence of pseudorandom functions, which is equivalent to the existence of OWFs~\cite{JACM:GolGolMic86,HILL99}, implies the hardness of learning for P/poly in the PAC model~\cite{ACM:Valaint84}.
Impagliazzo and Levin~\cite{FOCS:ImpLub90} initiate the study of the reverse direction, i.e. from the non-existence of cryptography to learning.
Subsequent work studies more general learning models, and constructs efficient learning algorithms assuming the non-existence of cryptography~\cite{C:BFKL93,ACM:NR06,CCC:OS17,ACM:NE19,FOCS:HirNana23}.
Among them, our results can be seen as a quantum analog of \cite{C:BFKL93,FOCS:HirNana23} in the following sense.
Blum, Furst, Kearns, and Lipton~\cite{C:BFKL93} construct efficient learning that works on average in the PAC learning model assuming the non-existence of OWFs~\cite{C:BFKL93}.
Because quantum channel learning and OWSGs can be seen as quantum generalizations of the PAC model and OWSGs, respectively, our result can be seen as a quantum analog of their result.
Hirahara and Nanashima~\cite{FOCS:HirNana23} construct an efficient learning algorithm for various kinds of learning models assuming the non-existence of OWFs.
Among them, they construct an efficient algorithm of distributional learning~\cite{ACM:KMRRSS94}.
Our results can be seen as a quantum analog of the result.
Because quantum state learning can be seen as a quantum generalization of the distributional learning model, our result can be seen as a quantum analog of their result.

It is shown that PAC learning is equivalent to Occam learning~\cite{IPL:BEHW87,TCS:BP92,ML:Schapire90}, which can be formulated as a search problem in $\compclass{NP}$.
Pitt and Valiant~\cite{ACM:PV88} showed that learning $k$-term DNF by $k$-DNF in the PAC model is $\compclass{NP}$-hard.
While it is not known whether we can show the $\compclass{NP}$-hardness of improper PAC learning and there exist some impossibility results~\cite{FOCS:ABX08}, it is shown that a specific average-case hardness of $\compclass{NP}$ already implies the hardness of improper PAC learning~\cite{STOC:DLS14,COLT:DS16,COLT:Vadhan17,COLT:DV21}.
Given these classical results, it is an interesting future direction whether we can characterize the computational complexity of learning inherently quantum data in terms of unitary complexity~\cite{ITCS:GH22,FOCS:MY23,arXiv:JYTALH23} or quantum language~\cite{arXiv:KA04}.
\color{black}
\fi

\paragraph{More on Quantum Learning.}
In quantum state tomography, a learner is given many independent copies of an unknown $n$-qubit quantum state $\rho$, and outputs a classical description $\widehat{\rho}$ such that $\mathsf{TD}(\rho,\widehat{\rho})\leq 1/\epsilon$.
It is known that a learner needs $O(2^{2n}\cdot\epsilon^2)$-copies if there is no promise on the unknown quantum state~\cite{ACM:OJ16,ACM:HHJWY16}.
Given the negative results, a natural question is which subclass of quantum states can be learned using only polynomially many copies of unknown quantum states.
\cite{TQC:KH21,STOC:CR21} shows that if given quantum states $\rho_k$ are promised contained in some restricted families $\cC$ of quantum states, then a learner can find $\rho_h\in\cC$ such that $\mathsf{TD}(\rho_k,\rho_h)\leq 1/\epsilon$ with probability $1-1/\delta$ given only $\poly(\log(\abs{\cC}),\log(\delta),\epsilon)$-many copies of $\rho_k$.
This means that all quantum states generated by quantum polynomial-time algorithms can be learned in a sample-efficient way.

Beyond sample efficiency, these lines of work (\cite{arXiv:Montanaro17,arXiv:GIKL23A,arXiv:GIKL23B,arXiv:HLBKALM24}) study which subclass of quantum states can be learned in polynomial-time.
It is shown that stabilizer quantum circuits can be learned in a time-efficient way~\cite{arXiv:Montanaro17}.
Grewal, Iyer, Kretschmer, and Liang~\cite{arXiv:GIKL23A,arXiv:GIKL23B} generalize their results and show that low-rank stabilizer pure states can be learned in a time-efficient way.
Huang, Liu, Broughton, Kim, Anshu, Landau, and McClean~\cite{arXiv:HLBKALM24} show that a family of pure states generated by geometrically local shallow quantum circuits can be learned in a time-efficient way.
Compared to their unconditional results, although our learning algorithm needs to query $\mathbf{PP}$ oracle, our learning algorithm can learn a classical description of arbitrary efficiently generatable pure states.

These lines of work (\cite{Aaronson07,Q:CHY16,Quantum:R18,JSM:ACHKN19,Quantum:GL22,arXiv:CHLLWY22,COLT:BJ95, QIP:AR07,CCC:AR17,PRA:GKZ19,arXiv:SSG23,TQC:KH21,STOC:Aaronson18,NP:HKP20,STOC:CR21}) study how to learn quantum processes in other learning models.
Aaronson~\cite{Aaronson07} considers one of quantum generalization of the PAC learning model~\cite{ACM:Valaint84}, and the subsequent works~\cite{Quantum:R18,JSM:ACHKN19,Quantum:GL22,arXiv:CHLLWY22} study the learning model.
In their model, a learner is given examples $(E_i,\Tr(E_i\rho))$, where $\rho$ is an unknown quantum state and $E_i$ is a POVM operator sampled according to some distribution $D$, to output a quantum state $\sigma$ such that $\Pr_{E\la \cD}[\Tr(E\rho)-\Tr(E\sigma) \leq \epsilon]\geq 1-\gamma$ for small $\epsilon$ and $\gamma$.
Bshouty and Jackson~\cite{COLT:BJ95} consider another quantum generalization of the PAC learning model.
Roughly speaking, the model is the same as standard Valiant's PAC model except that the learner can receive $\sum_{x}\sqrt{D(x)}\ket{x}\ket{c(x)} $ instead of $(x,c(x))$, where $x$ is sampled according to the distribution $D$.
The subsequent works~\cite{QIP:AR07,CCC:AR17,PRA:GKZ19,arXiv:SSG23} study the learning model.
Aaronson introduces another learning model called shadow tomography~\cite{STOC:Aaronson18}.
In the model, a learner is given many copies of $\rho$, and POVM operators $E_1,\cdots,E_m$.
The learner needs to estimate $\Tr(\rho E_1),\cdots \Tr(\rho E_m)$ up to additive error $\epsilon$.
The subsequent works \cite{NP:HKP20,STOC:CR21} also study the model.

\subsection{Future Directions}

We raise several open questions that remain to be explored.
Below, we highlight the following two questions.

\paragraph{Show the $\mathbf{PP}$ hardness of properly learning pure states.}

We show that $\mathbf{PP}$ oracle helps to properly learn pure states generated by quantum polynomial time algorithms.
We leave the reverse direction as an open problem.
In other words, can we decide any language in $\mathbf{PP}$ efficiently assuming that we can properly learn arbitrary pure states generated by quantum polynomial-time algorithms?

Remark that in the classical learning theory, the hardness of proper PAC learning is characterized by $\mathbf{NP}$.
It is shown that PAC learning can be reduced to Occam learning~\cite{IPL:BEHW87,TCS:BP92,ML:Schapire90}, which can be formulated as a search problem in $\mathbf{NP}$.
Hence, $\mathbf{NP}$ oracle is sufficient to achieve proper PAC learning for arbitrary classical polynomial time algorithms.
On the other hand, it is shown that if we can properly learn arbitrary classical polynomial-time algorithms in the proper PAC model~\cite{ACM:PV88}, then we can decide any language in $\mathbf{NP}$.
Does a similar relationship hold between $\mathbf{PP}$ and proper pure state learning as the relationship between $\mathbf{NP}$ and proper PAC learning?

\paragraph{Characterize the hardness of learning mixed states.}

In this work, we observe that the existence of EFI implies that there exist mixed states generated by quantum polynomial-time algorithms, which is hard to properly learn.
This implies that it is unlikely to reduce \emph{proper} mixed-state learning to classical language because it is conjectured that any classical oracles do not help to break the security of EFI~\cite{arXiv:LMW23}.
Note that the argument cannot be applied to improper mixed-state learning, and
there exists the possibility that we can reduce improper mixed-state learning to classical language. 
We leave as an open problem whether we can reduce improper mixed-state learning to classical language or not.
We also leave as an open problem whether we can characterize the hardness of proper mixed-state learning in terms of unitary complexities~\cite{ITCS:GH22,FOCS:MY23,arXiv:JYTALH23} instead of classical languages.

\section{Preliminaries}
\subsection{Notations}
Here, we introduce basic notations we will use in this paper.
$x\la X$ denotes selecting an element $x$ from a finite set $X$ uniformly at random, and $y \la \cA(x)$ denotes assigning to $y$ the output of a quantum or probabilistic or deterministic algorithm $\cA$ on an input $x$.
Let $[n]\seteq \{1,\cdots,n\}$.
QPT stands for quantum polynomial time. 
A function $f : \N \ra \R$ is a negligible function if, for any constant $c$, there exists $\secp_0 \in \N$ such that for any $\secp>\secp_0$, $f(\secp) < 1/\secp^c$.
We write $f(\secp) \leq \negl(\secp)$ to denote $f(\secp)$ being a negligible function.
When we refer to polynomials, we mean functions $p:\N\ra \N$, where there exists a constant $c$ such that for all $n\in\N$, it holds that $p(n)\leq cn^c$.

For simplicity, we often write $\ket{0}$ to mean $\ket{0\cdots 0}$.
For any two quantum states $\rho_1$ and $\rho_2$, $F(\rho_1,\rho_2)$ is the fidelity between them, and $\mathsf{TD}(\rho_1,\rho_2)$ is the trace distance between them.

\if0
\paragraph{Quantum Circuits.}
For convenience,
we assume that all quantum circuits use gates from the universal gate set $\{I,H, CNOT, T\}$. 
A quantum circuit consists of gate sets $\{I,H, CNOT, T\}$ that can additionally have non-unitary gates (a) introduce new qubits initialized in the zero state, (b) trace them out, or (c) measure them in the computational basis.
We say that a general quantum circuit has size $s$ if the total number of gates is at most $s$.
When we consider a quantum circuit $C$ that takes as input a fixed input initialized in the zero states, then we say that $C$ takes no input.
When we consider a quantum circuit $C$ that takes as input $s$-length eigenvalue states in the computational basis (i.e. classical string) and a fixed input initialized in the zero states, then we say that $C$ takes $s$-length classical string as input.
We say that a family of quantum circuits $\cC=\{\cC_n\}_{n\in\N}$ is polynomial-size quantum circuits if there exists a polynomial $p$ such that the size of $C$ is at most $p(n)$ for all $C\in\cC_n$ and $n\in\N$.
\fi

\paragraph{Sampleable Distribution.}
We say that a family of distributions $\cD=\{\cD_n\}_{n\in\N}$ is efficiently sampleable distribution if there exists a uniform QPT algorithm $D$ such that, for each $n\in\N$, the distribution of $D(1^n)$ is statistically identical to $\cD_n$. 

\begin{definition}[Kullback–Leibler Divergence]
    Let $P$ and $Q$ be a distribution over $\cZ_n=\bit^n$.
    \begin{align}
        D_{KL}(P||Q)\seteq\mathbb{E}_{z\la P}\left[\log(\frac{\Pr[z\la P]}{\Pr[z\la Q]})\right] =\sum_{z\in\cZ_n}\Pr[z\la P]\log(\frac{\Pr[z\la P]}{\Pr[z\la Q]}).
    \end{align}
\end{definition}

\if0
\begin{definition}[PP]
    A language $\cL:\bit^*\ra\bit$ is in $\mathbf{PP}$ if there exists a probabilistic polynomial-time algorithm $\cA$ such that
    \begin{itemize}
        \item If $\cL(x)=1$, then $\Pr[1\la\cA(x)]\geq \frac{1}{2}+\frac{1}{2^{-\abs{x}}}$.
        \item If $\cL(x)=0$, then $\Pr[1\la\cA(x)]< \frac{1}{2}$.
    \end{itemize}
\end{definition}

\begin{definition}[$\mathbf{NP^{PP}}$]
    A language $\cL:\bit^*\ra\bit$ is in $\mathbf{NP^{PP}}$ if there exists a language $\cL^*\in\mathbf{PP}$ and a deterministic algorithm $\cA$ such that
    \begin{itemize}
        \item If $\cL(x)=1$, then there exists a polynomial $p$ and $w\in\bit^{p(\abs{x})}$ such that $\Pr[1\la\cA^{\cL^*}(x,w)]=1$.
        \item If $\cL(x)=0$, then for all polynomial $p$ and $w\in\bit^{p(\abs{x})}$, we have
        $\Pr[1\la\cA^{\cL^*}(x,w)]=0$.
    \end{itemize}
\end{definition}
\fi

\subsection{Quantum Cryptographic Primitives}
In this section, we introduce quantum cryptographic primitives, which we will use in this work.
Note that throughout this work, as adversaries, we consider uniform QPT algorithms instead of non-uniform QPT algorithms.

\paragraph{One-Way State Generators (OWSGs)}
\begin{definition}[OWSGs with Pure State Outputs~\cite{Crypto:MY22}]\label{def:owsg_pure}
    A one-way state generator with pure state outputs is a set of uniform QPT algorithms $\Sigma\seteq(\keygen,\StateGen)$ such that:
\begin{itemize}
    \item[$\keygen(1^\secp)$:]
    It takes a security parameter $1^\secp$, and outputs a classical string $k$.
    \item[$\StateGen(1^\secp,k)$:]
    It takes a security parameter $1^\secp$ and $k$, and outputs a pure quantum state $\ket{\psi_k}$.
\end{itemize}
\paragraph{Security.}
We consider the following security experiment $\mathsf{Exp}_{\Sigma,\cA}(1^\secp)$.
\begin{enumerate}
    \item The challenger samples $k\la\keygen(1^\secp)$.
    \item $\cA$ can receive arbitrary polynomially many copies of $\ket{\psi_k}$.
    \item $\cA$ outputs $k^*$.
    \item The challenger measures $\ket{\psi_k}$ with $\{\ket{\psi_{k^*}}\bra{\psi_{k^*}}, I-\ket{\psi_{k^*}}\bra{\psi_{k^*}}\}$.
    \item The experiment outputs $1$ if the result is $\ket{\psi_{k^*}}\bra{\psi_{k^*}}$, and $0$ otherwise.
\end{enumerate}
We say that an OWSG with pure state outputs satisfies security if, for any uniform QPT adversaries $\cA$,
\begin{align}
    \Pr[\mathsf{Exp}_{\Sigma,\cA}(1^\secp)=1]\leq \negl(\secp).
\end{align}
\end{definition}
\begin{remark}
    We sometimes omit the security parameter $1^\secp$ of $\StateGen(1^\secp,k)$ when it is clear from the context.
\end{remark}

\begin{definition}[Secretly-Verifiable Statistically-Invertible OWSGs (SV-SI-OWSG)~\cite{Eprint:MY22}]
        A SV-SI-OWSG is a set of uniform QPT algorithms $\Sigma\seteq (\keygen,\StateGen)$ such that:
    \begin{description}
        \item[$\keygen(1^\secp)$:] It takes a security parameter $1^\secp$, and outputs a classical string $k$.
        \item[$\StateGen(1^\secp,k)$:] It takes a security parameter $1^\secp$ and $k$, and outputs $\psi_{\secp,k}$. 
    \end{description}
    \paragraph{Statistically Invertibility:}
    For any $(k,k^*)$ with $k\neq k^*$,
    \begin{align}
        \mathsf{TD}(\psi_{\secp,k},\psi_{\secp,k^*}) \geq 1-\negl(\secp).
    \end{align}
    \paragraph{Computational Non-Invertibility:}
    For any uniform QPT adversaries $\cA$ and any polynomials $t$,
    \begin{align}
        \Pr[k\la\cA(\psi_{\secp,k}^{\otimes t(\secp)}):k\la\keygen(1^\secp),\psi_{\secp,k}\la\StateGen(1^\secp,k)]\leq \negl(\secp).
    \end{align}
\end{definition}
Note that \cite{Eprint:MY22} shows that the existence of EFI is equivalent to that of SV-SI-OWSG.

\if0
\begin{definition}[OWSGs with Pure State Outputs]\label{def:owsg_pure}
    A one-way state generator (OWSG) with pure state outputs is a set of uniform QPT algorithms $\Sigma\seteq(\keygen,\StateGen,\Vrfy)$ such that:
\begin{itemize}
    \item[$\keygen(1^\secp)$:] It takes a security parameter $1^\secp$, and outputs a classical string $k$.
    \item[$\StateGen(1^\secp,k)$:] It takes a security parameter $1^\secp$ and $k$, and outputs a pure quantum state $\ket{\psi_k}$.
    \item[$\Vrfy(1^\secp,k,\psi)$:]It takes a security parameter $1^\secp$, $k$ and $\psi$, and outputs $\top$ or $\bot$. 
\end{itemize}

\paragraph{Correctness.}
We have
\begin{align}
\Pr[\top\la\Vrfy(1^\secp,k,\ket{\psi_k}):k\la\keygen(1^\secp),\ket{\psi_k}\la\StateGen(1^\secp,k)]\geq 1-\negl(\secp).
\end{align}
\paragraph{Security.}
For any uniform QPT algorithm $\cA$ and any polynomial $t(\cdot)$,
\begin{align}
\Pr[\top\la\Vrfy(1^\secp,k^*,\ket{\psi_k}):k\la\keygen(1^\secp),\ket{\psi_k}\la\StateGen(1^\secp,k),k^*\la\cA(\ket{\psi_k}^{\otimes t(\secp)})]\leq \negl(\secp).
\end{align}
\end{definition}
\begin{remark}
    We also consider weak OWSGs (with pure state outputs), where for some polynomial $p$, it satisfies the following $p$-security instead of the standard security above.
    \paragraph{$p$-security}
    For any uniform QPT algorithm $\cA$ and any polynomial $t(\cdot)$,
    \begin{align}
\Pr[\top\la\Vrfy(1^\secp,k^*,\psi_k):k\la\keygen(1^\secp),\psi_k\la\StateGen(1^\secp,k),k^*\la\cA(\psi_k^{\otimes t(\secp)})]\leq 1-p(\secp)
    \end{align}
    for all sufficiently large security parameters $\secp\in\N$.
\end{remark}
\fi

\if0
\paragraph{Pseudorandom Quantum State Generators}
We also consider pseudorandom quantum state generators.
\begin{definition}[Pseudorandom Quantum State Generators]
    A pseudorandom quantum states generator is a uniform QPT algorithm $G$ that takes as input $1^\secp$ and $k\in\bit^{\secp}$, outputs an $m$-qubit quantum state $\ket{\psi_k}$.
    As the security, we require the following: for any polynomials $t$ and uniform QPT adversaries $\cA$, there exists a negligible function $\negl$ such that,
    \begin{align}
        \abs{\Pr[1\la\cA\left(\ket{\psi_k}^{\otimes t(\secp)}\right):k\la\bit^{\secp},\ket{\psi_k}\la G(1^\secp,k)]-\Pr[1\la\cA\left(\ket{\theta}^{\otimes t(\secp)}\right):\ket{\theta}\la \mathcal{H}_{m}]}\leq \negl(\secp)
    \end{align}
\end{definition}
\fi

\section{Model of Quantum State Learning}\label{sec:learning_model}

\begin{definition}[Quantum State Learnability]\label{def:state_learnability}
    Let $\cC$ be a quantum algorithm that takes $1^n$ with $n\in\N$, $x\in\cX_n=\bit^{\poly(n)}$, and outputs a $\poly(n)$-qubit quantum state $\psi_x$.
    We say that an algorithm $\cA$ can learn $\cC$ 
    if, for all polynomials $\epsilon$ and $\delta$, there exists a polynomial $t$ such that for all $x\in\cX_n$, we have 
    \begin{align}
        \Pr[
        \begin{array}{ll}
        \mathsf{TD}(\psi_{x},\psi_{y})\leq 1/\epsilon(n)
        \end{array}
        :
    \begin{array}{ll}
       \psi_{x}\la \cC(x,1^n)\\
        y\la \cA(\cC,\psi_{x}^{\otimes t(n)},1^n)\\
        \psi_y \la  \cC(y,1^n)
    \end{array}
        ]\geq 1-1/\delta(n)
    \end{align}
    for all sufficiently large $n\in\N$.
\end{definition}
\begin{remark}
    Throughout this work, we consider proper quantum state learning, where the learner's output is promised within $\cX_n$.
    The other variant of quantum state learning model is an improper one, where the learner's task is not to output $y$ such that $\cC(y)$ approximates $\cC(x)$, but output arbitrary polynomial-size quantum circuits $h$, whose output approximates $\cC(x)$.
    Note that if we can properly learn $\cC$, then we can improperly learn $\cC$.
    On the other hand, the other direction does not follow in general.
\end{remark}

In this work, we also consider the average-case hardness of quantum state learning, which is defined as follows:
\begin{definition}[Average-Case Hardness of Quantum State Learning]\label{def:AH_QLearn}
    Let $\cC$ be a quantum algorithm that takes $1^n$ with $n\in\N$, $x\in\cX=\bit^{\poly(n)}$, and outputs a $\poly(n)$-qubit quantum state $\psi_x$.
    We say that $\cC$ is hard on average if there exists an efficiently sampleable distribution $\cS_n$ over $\cX_n$ and polynomials $\epsilon$ and $\delta$ such that for all polynomials $t$ and all QPT algorithms $\cA$, we have
    \begin{align}
        \Pr[
        \begin{array}{ll}
        \mathsf{TD}(\psi_x,\psi_y)\leq 1/\epsilon(n)
        \end{array}
        :
    \begin{array}{ll}
    x\la \cS_n \\
    \psi_x\la \cC(x,1^n)\\
    y\la \cA(\cC,\psi_{x}^{\otimes t(n)},1^n)\\
    \psi_y\la \cC(y,1^n)
    \end{array}
        ]\leq 1/\delta(n)
    \end{align}
    for all sufficiently large $n\in\N$.
\end{definition}

\section{Quantum State Learning Algorithm with Classical Oracle}\label{sec:quantum_state_learning}

\if0
In this section, we prove the following main theorem.
\begin{theorem}[Quantum State Learning Algorithm with $\mathbf{NP^{PP}}$ oracle]\label{thm:Learning_w_Oracle}
    For all polynomial-time quantum algorithms $\cC$ with pure state outputs,
    there exists a language $\cL\in\mathbf{NP^{PP}}$ and an oracle-aided quantum polynomial-time algorithm $\cA^{\cL}$ that can learn $\cC$ in the sense of \cref{def:state_learnability}.
\end{theorem}

To prove \cref{thm:Learning_w_Oracle}, we use the following \cref{lem:Algo_for_SQMLH,lem:Hoeffding,lem:distance_preserve_measurement}, and the following simple \cref{prop:frobenius}.

\begin{lemma}[Algorithm for Quantum Maximum Likelihood Problem]\label{lem:Algo_for_SQMLH}
    For all quantum polynomial-time algorithms $\cC$ that takes $1^n$ and $x\in\cX_n=\bit^{\poly(n)}$ as input and outputs a classical string $z\in\cZ_n=\bit^{\poly(n)}$,
    there exists a language $\cL\in \mathbf{NP^{PP}}$ and an oracle-aided deterministic polynomial-time algorithm $\cA^\cL$ such that for all $z\in\cZ_n$ with $0<\max_{x\in\cX_n}\Pr[z\la\cC(1^n,x)]$ and all polynomials $\epsilon$,
    \begin{align}
         \frac{\max_{x\in\cX_n}\Pr[z\la \cC(x,1^n) ]}{\left(1+1/\epsilon(n)\right)}\leq   \Pr[z\la \cC(h,1^n): h\la \cA^{\cL}(1^n, \cC, z) ] \leq \max_{x\in\cX_n}\Pr[z\la \cC(x,1^n) ]
    \end{align}
    for all sufficiently large $n\in\N$.
\end{lemma}
We give the proof of \cref{lem:Algo_for_SQMLH} in \cref{sec:Alog_for_SQMLH}.

\begin{lemma}[Random Measurement preserves Distance~\cite{CCC:AE07}]\label{lem:distance_preserve_measurement}
    Let $\rho_0$ and $\rho_1$ be $n$-qubit quantum state.
    For any $n\in\N$, there exists an efficient implementable POVM measurement $\cT=\{T_i\}_{i\in\bit^{\poly(n)}}$ such that
    \begin{align}
        \norm{\cT(\rho_0)-\cT(\rho_1)}_1\geq \frac{1}{3}\norm{\rho_0-\rho_1}_{F}
    \end{align}
    for arbitrary pair of of $n$-qubit quantum states $(\rho_0,\rho_1)$.
    Here, $\norm{\cdot}_1$ is a $\ell_1$-norm and $\norm{\cdot}_F$ is frobenius norm.
\end{lemma}

\begin{lemma}[Followed by Hoeffding Inequality]\label{lem:Hoeffding}
    Let $n\in\N$, $\cZ=\bit^{\poly(n)}$, and let $\cX$ be a family of function such that $X:\cZ\ra [0,M]$ for all $X\in\cX$, where
    $M\in\cR$.
    Let $\cC$ be an arbitrary distribution over $\cZ$.
    If 
    \begin{align}
        T\geq \frac{M^2}{\epsilon^2}\abs{\log{\abs{\cX}}+\log(1/\delta)},
    \end{align}
    then with probability $1-\delta$, for all $X\in\cX$,
    \begin{align}
        \Pr_{\{z_i\}_{i\in[T]}\la \cC^{m}}\left[\abs{\frac{1}{m}\sum_{i\in[T]}X(z_i)-  \mathbf{E}_{z\la\cC}[X(z)] }\leq \epsilon\right]\geq 1-\delta.
    \end{align}
\end{lemma}

\if0
\begin{proposition}\label{prop:distribution}
    For any distributions $\cD$ and $\cD^*$ with $\frac{1}{2}\sum_{x}\abs{\cD(x)-\cD^*(x)}\geq\epsilon$, we have
    \begin{align}
        \Pr_{x\la\cD}\left[\frac{\cD^*(x)}{1+\alpha}\leq \cD(x)\right]\geq \frac{\epsilon}{1+\alpha}.
    \end{align}
\end{proposition}
\begin{proof}[Proof of \cref{prop:distribution}]
Because
\begin{align}
    \frac{1}{2}\sum_{x:\cD^*(x)\leq \cD(x)}(\cD(x)-\cD^*(x))=\frac{1}{2}\sum_{x:\cD(x)<\cD^*(x)}(\cD^*(x)-\cD(x)),
\end{align}
we have
\begin{align}
    \frac{\epsilon}{2}\leq\frac{1}{2}\sum_{x:\cD^*(x)\leq \cD(x)}(\cD(x)-\cD^*(x))
\end{align}

This implies that
\begin{align}
    &\frac{\epsilon}{2}\leq\frac{1}{2}\sum_{x:\cD^*(x)\leq \cD(x)}(\cD(x)-\cD^*(x))\\
    &=\sum_{x:\cD^*(x)\leq\cD(x)}\frac{\cD(x)}{2}\left(1-\frac{\cD^*(x)}{\cD(x)}\right)\\
    &\leq \sum_{x:\cD^*(x)\leq \cD(x)}\frac{\cD(x)}{2}\left(1-\frac{\cD^*(x)}{\cD(x)}\right)
    +
    \sum_{x:1< \frac{\cD^*(x)}{\cD(x)}\leq 1+\alpha}\frac{\cD(x)}{2}\left(\frac{\cD^*(x)}{\cD(x)}-1\right)
    \\
    &\leq \sum_{x:\cD^*(x)\leq \cD(x)}\frac{\cD(x)}{2}
    +
    \sum_{x:1< \frac{\cD^*(x)}{\cD(x)}\leq 1+\alpha}\frac{\alpha\cdot\cD(x)}{2}\\
    &\leq  \sum_{x:\cD^*(x)\leq \cD(x)}\frac{(1+\alpha)\cdot\cD(x)}{2}
    +
    \sum_{x:\cD(x)< \cD^*(x)\leq (1+\alpha)\cD(x)}\frac{(1+\alpha)\cdot\cD(x)}{2}\\
    &=\sum_{x: \cD^*(x)\leq(1+\alpha)\cD(x)}\frac{(1+\alpha)\cdot\cD(x)}{2}.
\end{align}    
Therefore, we have
\begin{align}
    \frac{\epsilon}{1+\alpha}\leq \sum_{x:\cD^*(x)\leq (1+\alpha)\cD(x)}\cD(x)=\Pr_{x\la\cD}[\cD^*(x)\leq(1+\alpha)\cD(x)]
\end{align}
\end{proof}
\fi

\begin{proposition}\label{prop:frobenius}
When $\rho_0$ and $\rho_1$ are pure states,
\begin{align}
    \norm{\rho_0-\rho_1}_1= \sqrt{2}\norm{\rho_0-\rho_1}_F 
\end{align}
\end{proposition}

We describe the proof of \cref{thm:Learning_w_Oracle}.

\begin{proof}[Proof of \cref{thm:Learning_w_Oracle}]          
    Let $\cL$ be a language in $\mathbf{NP^{PP}}$ and $\cA^{\cL}$ given in \cref{lem:Algo_for_SQMLH}.
    Let us describe our notations, and our oracle-aided learning algorithm $\cB^{\cL}$.
    \begin{description}
        \item[Notations:] $ $
        \begin{itemize}
            \item Let $\cC$ be a quantum polynomial-time algorithm that takes as input $1^n$ and $x\in\cX_n=\bit^{\poly(n)}$, and outputs a $\ket{\psi_x}$.
            \item Let $\cC^*_{\alpha}$ be a quantum polynomial-time algorithm that takes as input $1^n$ and $x\in \cX_n$, generates $\ket{\psi_x}$ by running $\cC(1^n,x)$, measures $\ket{\psi_x}$ with $\cT$, and obtains the measurement outcome $z\in\cZ_n=\bit^{\ell(n)}$, where $\ell$ is an appropriate polynomial.
            $\cC^*_{\alpha}$ outputs the measurement outcome $z$ with probability $1-\alpha$, and, with probability $\alpha$, outputs a uniformly random $Z\la \cZ_n$.
            \item Let $\cC^{*T}_{\alpha}$ be a $T$-repetition version of $\cC^*_{\alpha}$.
            More formally,
            it takes as input $1^n$ and $x\in \cX_n$, and generates $\{\ket{\psi_x}_i\}_{i\in[T]}$.
            Then, for each $i\in[T]$, it  measures $\ket{\psi_x}_i$ with $\cT$, obtains the measurement outcome $z_i$, and sets $Z_i=z_i$ with probability $1-\alpha$ and $Z_i\la\cZ$ with probability $\alpha$.
            Finally, it outputs $\{Z_i\}_{i\in[T]}$.
            \item Let $\epsilon^*(n)=\frac{4\epsilon(n)^2}{9}$
            \item Let $T\geq \epsilon^4(n)\left(\ell(n)+1\right)^2\left(\log(\abs{\cX_n})+\log(2\delta)\right)$ be a number of copies of $\ket{\psi}$, which $\cB$ uses.
        \end{itemize}
        \item[The description of $\cB^\cL$:]$ $
        \begin{enumerate}
            \item
            It takes as input $1^n$, a quantum polynomial-time algorithm $\cC$, and $\{\ket{\psi_x}_i\}_{i\in[T]}$, which is generated $\cC(1^n,x)$ with unknown $x$.
            \item For all $i\in[T]$, it measures $\ket{\psi_x}_i$ with the POVM $\cT$ given in \cref{lem:distance_preserve_measurement},  obtains a classical string $z_i$, and sets $Z_i=z_i$ with probability $1/2$ and $Z_i\la\cZ_n$ with probability $1/2$.
            \item For all $i\in[T]$, it runs $\cA^\cL(1^n,\cC^{*T}_{1/2},\{Z_i\}_{i\in[T]})$, and obtains $h\in\cX_n$ such that
            \begin{align}
                \frac{\max_{x\in\cX_n}\Pr[\{Z_i\}_{i\in[T]}\la \cC^{*T}_{1/2}(1^n,x)]}{\Pr[\{Z_i\}_{i\in[T]}\la\cC^{*T}_{1/2}(1^n,h) ]} \leq
                \left(1+1/p(n)\right)\leq 2^{\left(\frac{T}{2\epsilon^*(n)}\right)}
            \end{align}
            with some polynomial $p$.
        \end{enumerate}
    \end{description}

Now, we show that
\begin{align}
    \frac{1}{2}\norm{\ket{\psi_x}-\ket{\psi_h}}_1\leq 1/\epsilon(n)
\end{align}
with probability $1-1/\delta$, where the probability is taken over $\{Z_i\}_{i\in[T]}\la\cC_{1/2}^{*T}$.

First, for all $y\in\cX_n$, the output $h$ of $\cB^{\cL}$ satisfies
\begin{align}
    \mathbb{E}_{Z\la\cC^*_{1/2}(x)}\left[\log(\frac{\Pr[Z\la\cC^*_{1/2}(y)]}{\Pr[Z\la \cC_{1/2}^*(h)]})\right]\leq 1/\epsilon^*(n) 
\end{align}
with probability $1-1/\delta$,
where the probability is taken over $\{Z_i\}_{i\in[T]}\la\cC_{1/2}^{*T}$  (Recall that $h$ depends on $\{Z_i\}_{i\in[T]}$).
We will prove the inequality above later.
Once we have obtained the inequality above, the remaining part straightforwardly follows.
First, we have
\begin{align}
   & D_{KL}(\cC_{1/2}^*(x)||\cC_{1/2}^*(h))\\
   &\seteq\mathbb{E}_{Z\la\cC^*_{1/2}(x)}\left[\log(\frac{\Pr[Z\la\cC^*_{1/2}(x)]}{\Pr[Z\la \cC_{1/2}^*(h)]})\right]\leq 1/\epsilon^*(n)
\end{align}
with probability $1-1/\delta$.
Furthermore, from Pinsker's inequality, we have
\begin{align}
    \frac{1}{2}\norm{\cC_{1/2}^*(x)-\cC_{1/2}^*(h)}_1\leq \sqrt{\frac{1}{2}D_{KL}(\cC_{1/2}^*(x)||\cC_{1/2}^*(h))}\leq \sqrt{\frac{1}{2\epsilon^*}}.
\end{align}
On the other hand, for all $0\leq\alpha\leq1$, we have
\begin{align}
    &\frac{1}{2}\norm{\cC_\alpha^*(x)-\cC_\alpha^*(h)}_1\\
    &\seteq\frac{1}{2}\sum_{Z}\abs{\Pr[Z\la \cC_\alpha^*(x)]-\Pr[Z\la\cC_\alpha^*(y)]}\\ 
    &=\frac{1}{2}\sum_{Z}\abs{(1-\alpha)\Pr[Z\la \cC_0^*(x)]-(1-\alpha)\Pr[Z\la\cC_0^*(y)]}\\
    &=(1-\alpha)\cdot \frac{1}{2}\sum_{Z}\abs{\Pr[Z\la \cC_0^*(x)]-\Pr[Z\la\cC_0^*(y)]}\\
    &=\frac{(1-\alpha)}{2}\norm{\cC_0^*(x)-\cC_0^*(h)}_1.
\end{align}
Hence,
\begin{align}
    \frac{1}{2}\norm{\cC_0^*(x)-\cC_0^*(h)}_1\leq \sqrt{\frac{2}{\epsilon^*(n)}}.
\end{align}
Finally, from \cref{lem:distance_preserve_measurement,prop:frobenius}, we have
\begin{align}
    \frac{1}{2}\norm{\ket{\psi_x})-\ket{\psi_h}}_1\leq\frac{1}{3\sqrt{2}}\norm{\cC^*_0(x)-\cC^*_0(h)}_1\leq \frac{2}{3\sqrt{\epsilon^*(n)}}\seteq\frac{1}{\epsilon(n)}.
\end{align}

Now, we prove the remaining part, that is, for all $y\in\cX_n$,
\begin{align}
    \mathbb{E}_{Z\la\cC^*_{1/2}(x)}\left[\log(\frac{\Pr[Z\la\cC^*_{1/2}(y)]}{\Pr[Z\la \cC_{1/2}^*(h)]})\right]\leq 1/\epsilon^*(n)
\end{align}
with $1-\delta$, where the probability is taken over $\{Z_i\}_{i\in[T]}\la \cC_{1/2}^{*T}$. (Remind that $h$ depends on $ \{Z_i\}_{i\in[T]}$.)
From the construction of $\cB^{\cL}$, for all $\{Z_i\}_{i\in[T]}$, $\cB$ outputs $h$ such that 
\begin{align}
    &\frac{\max_{y\in\cX_n}\Pr[\{Z_i\}_{i\in[T]}\la \cC^{*T}_{1/2}(y) ]}{\Pr[\{Z_i\}_{i\in[T]}\la\cC^{*T}_{1/2}(h) ]}\\
    &=\frac{\max_{y_\in\cX_n}\Pi_{i\in[T]}\Pr[Z_i\la \cC^{*T}_{1/2}(y) ]}{\Pi_{i\in[T]}\Pr[Z_i\la\cC^{*T}_{1/2}(h) ]} \leq
    \left(1+1/p(n)\right)\leq 2^{\left(\frac{T}{2\epsilon^*(n)}\right)}
\end{align}
with probability $1$.
By taking logarithmic and dividing with $T$, we have
\begin{align}
    \left(\frac{1}{T}\sum_{i\in[T]}\log(\frac{1}{\Pr[Z_i\la \cC^*_{1/2}(h)]}) +\max_{y\in\cX_n}\frac{1}{T}\sum_{i\in[T]}\log(\Pr[Z_i\la \cC^*_{1/2}(y)])  \right) \leq \frac{1}{2\epsilon^*(n)}
\end{align}
This implies that
for all $y\in\cX_n$, 
\begin{align}
&\left(\frac{1}{T}\sum_{i\in[T]}\log(\frac{1}{\Pr[Z_i\la \cC^*_{1/2}(h)]}) -\frac{1}{T}\sum_{i\in[T]}\log(\frac{1}{\Pr[Z_i\la \cC^*_{1/2}(y)]})  \right)\\
&=\left(\frac{1}{T}\sum_{i\in[T]}\log(\frac{1}{\Pr[Z_i\la \cC^*_{1/2}(h)]}) +\frac{1}{T}\sum_{i\in[T]}\log(\Pr[Z_i\la \cC^*_{1/2}(y)])  \right)\\
&\leq \left(\frac{1}{T}\sum_{i\in[T]}\log(\frac{1}{\Pr[Z_i\la \cC^*_{1/2}(h)]}) +\max_{y\in\cX_n}\frac{1}{T}\sum_{i\in[T]}\log(\Pr[Z_i\la \cC^*_{1/2}(y)])  \right) 
\leq \frac{1}{2\epsilon^*(n)}
\end{align}
Furthermore, because
\begin{align}
    \log(\frac{1}{\Pr[z\la\cC^*_{1/2}(x)]})\leq \ell(n)+1
\end{align}
for all $z\in\cZ_n=\bit^{\ell(n)}$ and $x\in\cX_n$,
\cref{lem:Hoeffding} implies that,
with probability $1-1/\delta$, we have
\begin{align}
    \left(\sum_{z}\Pr[z\la \cC^*_{1/2}(x)]\log(\frac{1}{\Pr[z\la \cC^*_{1/2}(h)]})-\frac{1}{T}\sum_{i\in[T]}\log(\frac{1}{\Pr[Z_i\la \cC^*_{1/2}(h)]})\right)\leq \frac{1}{4\epsilon^*(n)}
\end{align}
and 
\begin{align}
    \left(\frac{1}{T}\sum_{i\in[T]}\log(\frac{1}{\Pr[z_i\la \cC^*_{1/2}(y)]})-\sum_{z}\Pr[z\la \cC^*_{1/2}(x)]\log(\frac{1}{\Pr[z\la \cC^*_{1/2}(y)]})\right)\leq \frac{1}{4\epsilon^*(n)}.
\end{align}

By summing up three inequalities above, we have
\begin{align} 
      \left(\sum_{z}\Pr[z\la \cC^*_{1/2}(x)]\log(\frac{1}{\Pr[z\la \cC^*_{1/2}(h)]}) -\sum_{z}\Pr[z\la\cC^*_{1/2}(x)]\log(\frac{1}{\Pr[z\la \cC^*_{1/2}(x)]})  \right) \leq 1/\epsilon^*(n)
\end{align}
with probability $1-1/\delta$.
This implies that, with probability $1-1/\delta$,
\begin{align}
    \mathbb{E}_{z\la\cC^*_{1/2}(x)}\left[\log(\frac{\Pr[z\la\cC^*_{1/2}(y)]}{\Pr[z\la \cC_{1/2}^*(h)]})\right]\leq 1/\epsilon^*(n).
\end{align}
\end{proof}
\fi

In this section, we prove the following main theorem.
\begin{theorem}[Quantum State Learning Algorithm with $\mathbf{PP}$ oracle]\label{thm:Learning_w_Oracle}
    For all quantum polynomial-time algorithms $\cC$ with pure state outputs,
    there exists a language $\cL\in\mathbf{PP}$ and an oracle-aided quantum polynomial-time algorithm $\cA^{\cL}$ that can learn $\cC$ in the sense of \cref{def:state_learnability}.
\end{theorem}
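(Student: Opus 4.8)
The plan is to make the Chung--Lin learning algorithm~\cite{TQC:KH21} time-efficient by replacing its two inefficient subroutines with efficient substitutes, using the $\mathbf{PP}$ oracle only for the second. First I would reduce pure-state learning to learning a classical output distribution. Instead of the Haar-random measurement of~\cite{TQC:KH21}, I would measure each copy with an efficiently implementable $4$-design POVM $\cT$, which contracts distances only mildly: $\| \cT(\rho_0)-\cT(\rho_1) \|_1 \geq \tfrac{1}{3} \| \rho_0-\rho_1 \|_F$~\cite{CCC:AE07}, and for pure states $\| \rho_0-\rho_1 \|_1 = \sqrt{2}\, \| \rho_0-\rho_1 \|_F$. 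Writing $\cC^*$ for the QPT algorithm that runs $\cC(x)$ and measures with $\cT$, it therefore suffices to output $h$ whose induced classical distribution $\cC^*(h)$ is close in total variation to $\cC^*(x)$: any such $h$ yields $\ket{\psi_h}$ close to $\ket{\psi_x}$ in trace distance.

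The main obstacle is the maximum-likelihood step, which I would formalize as the problem QMLH: given samples and the description of $\cC$, output $h$ with $\Pr[z \leftarrow \cC(h)]$ within a $(1+1/\epsilon)$ factor of $\max_{x\in\cX_n}\Pr[z\leftarrow\cC(x)]$. The key lemma is that QMLH is solvable by a PPT algorithm with a $\mathbf{PP}$ oracle. To prove this I would first invoke~\cite{FR99}, which gives a deterministic polynomial-time algorithm with a $\mathbf{PP}$ oracle that \emph{exactly} computes $\Pr[z\leftarrow\cC]$ for any QPT $\cC$ and string $z$. Then I would design an (inefficient) post-selection quantum algorithm $\widehat{\cQ}[z]$ that prepares a uniform superposition over $x\in\cX_n$ entangled with $T$ copies of the pre-measurement state of $\cC(x)$, post-selects all $T$ output registers onto $z$, and measures the $x$-register. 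For $T$ large this concentrates on approximate maximizers, since any bad $h$ is produced with probability at most $(1+1/\epsilon)^{-T}$ and a union bound over $\cX_n$ controls the total bad mass. Finally I would simulate $\widehat{\cQ}[z]$ exactly by a PPT algorithm with $\mathbf{PP}$ oracle, sampling the outcome $h$ bit by bit: each conditional bit-probability is a ratio of two post-selection probabilities, each computable via~\cite{FR99}.

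A second, subtler obstacle is that an \emph{approximate} likelihood-maximizer need not statistically approximate $\cC^*(x)$, even as $T\to\infty$, so a naive maximum-likelihood argument fails. Following~\cite{ML:AW92}, I would run the whole pipeline on the smoothed circuit $\cC^*_{1/2}$, which emits the measurement outcome with probability $1/2$ and a uniformly random string otherwise. Smoothing forces every outcome probability away from $0$, so $\log\!\big(1/\Pr[z\leftarrow\cC^*_{1/2}(\cdot)]\big)$ is bounded by $\ell(n)+1$; a Hoeffding-style uniform-convergence bound then converts the approximate log-likelihood guarantee into a bound on $D_{KL}\!\big(\cC^*_{1/2}(x)\,\|\,\cC^*_{1/2}(h)\big)$, and Pinsker's inequality bounds the total-variation distance. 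Since $\| \cC^*_\alpha(x)-\cC^*_\alpha(h) \|_1 = (1-\alpha) \| \cC^*_0(x)-\cC^*_0(h) \|_1$, the $\alpha=1/2$ bound transfers back to the unsmoothed distribution, and the distance-preservation of $\cT$ closes the chain down to $\ket{\psi_x},\ket{\psi_h}$.

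To finish I would assemble the pieces and choose $T=\mathrm{poly}(n,\log|\cX_n|,\epsilon,\log\delta)$ large enough that the concentration bound and the QMLH approximation guarantee hold simultaneously with probability $1-1/\delta$, giving $\mathsf{TD}(\psi_x,\psi_h)\leq 1/\epsilon$ as required by \cref{def:state_learnability}. I expect the $\mathbf{PP}$-oracle simulation of the post-selected sampler to be the technical heart of the argument, while the smoothing trick is what makes the statistical guarantee go through.
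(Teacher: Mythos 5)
Your proposal matches the paper's proof essentially step for step: the $4$-design measurement in place of Haar-random POVMs, the smoothed circuit $\cC^*_{1/2}$ (credited to \cite{ML:AW92}) with the Hoeffding/KL/Pinsker chain, and the $\mathbf{PP}$-oracle bit-by-bit simulation of the post-selected sampler $\widehat{\cQ}[z]$ via \cite{FR99} are exactly the paper's ingredients, assembled in the same order. The only detail the paper adds is a case distinction for the regime where $\norm{\ket{\psi_x}-\ket{\psi_h}}_1$ is already exponentially small, since the distance-preservation guarantee of \cref{lem:random_measurement} holds only when the Frobenius distance exceeds $\left(2^{-n/3}/A\right)^{1/4}$ --- a routine point that your unconditional statement of the measurement lemma glosses over.
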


To prove \cref{thm:Learning_w_Oracle}, we use the following \cref{lem:Algo_for_SQMLH,lem:Hoeffding,lem:random_measurement}, and the following \cref{prop:frobenius}.

\begin{lemma}[Algorithm for Quantum Maximum Likelihood Problem]\label{lem:Algo_for_SQMLH}
    For all quantum polynomial-time algorithms $\cC$ that takes $1^n$ and $x\in\cX_n=\bit^{\poly(n)}$ as input and outputs a classical string $z\in\cZ_n=\bit^{\poly(n)}$,
    there exists a language $\cL\in \mathbf{PP}$ and an oracle-aided PPT algorithm $\cA^\cL$ such that for all $z\in\cZ_n$ with $0<\max_{x\in\cX_n}\Pr[z\la\cC(1^n,x)]$ and all polynomial $\epsilon$ and $\delta$,
    \begin{align}
         \Pr_{h\la\cA^{\cL}(1^n,\cC,z)}\left[\frac{\max_{x\in\cX_n}\Pr[z\la \cC(x,1^n) ]}{\left(1+1/\epsilon(n)\right)}\leq   \Pr[z\la \cC(h,1^n)] \leq \max_{x\in\cX_n}\Pr[z\la \cC(x,1^n) ]\right]\geq 1-2^{-\delta(n)}
    \end{align}
    for all sufficiently large $n\in\N$.
\end{lemma}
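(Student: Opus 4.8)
The plan is to follow the two-stage strategy sketched in the technical overview: first exhibit an \emph{inefficient} quantum algorithm $\widehat{\cQ}[z]$ that uses post-selection to solve the maximum-likelihood problem, and then show that a PPT algorithm with a $\mathbf{PP}$ oracle samples \emph{exactly} from the output distribution of $\widehat{\cQ}[z]$. For a polynomial $T$ to be fixed later, $\widehat{\cQ}[z]$ first prepares $\frac{1}{\sqrt{\abs{\cX_n}}}\sum_{x\in\cX_n}\ket{x}_W$, coherently runs $T$ copies of $\cC$ while deferring its final measurement to obtain $\frac{1}{\sqrt{\abs{\cX_n}}}\sum_x \ket{x}_W\bigotimes_{i\in[T]}\ket{\cC(x)}_{Y_iV_i}$, post-selects every outcome register $Y_i$ onto $\ket{z}$, and measures $W$ in the computational basis. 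A direct computation shows that the post-selected state is proportional to $\sum_h \sqrt{\Pr[z\la\cC(h)]^T}\ket{h}_W\otimes(\cdots)$, so measuring $W$ yields $h$ with probability exactly
\[
\Pr[h\la\widehat{\cQ}[z]]=\frac{\Pr[z\la\cC(h)]^T}{\sum_{x\in\cX_n}\Pr[z\la\cC(x)]^T}.
\]

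Next I would fix $T$ to guarantee correctness. If $h$ is a bad answer, i.e.\ $\Pr[z\la\cC(h)]<\max_{x}\Pr[z\la\cC(x)]/(1+1/\epsilon(n))$, then, bounding the denominator from below by $\max_x\Pr[z\la\cC(x)]^T$, we get $\Pr[h\la\widehat{\cQ}[z]]<(1+1/\epsilon(n))^{-T}$. A union bound over $\cX_n$ then bounds the total bad-answer probability by $\abs{\cX_n}(1+1/\epsilon(n))^{-T}$. Choosing $T=O\big(\epsilon(n)(\log\abs{\cX_n}+\delta(n))\big)$, which is a polynomial because $\log\abs{\cX_n}=\poly(n)$ and $\epsilon,\delta$ are polynomials, drives this below $2^{-\delta(n)}$, matching the claimed success guarantee.

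The third and central step is to replace the post-selection by a $\mathbf{PP}$ oracle. I would invoke \cite{FR99}, which gives a deterministic polynomial-time algorithm that, using a $\mathbf{PP}$ oracle, computes \emph{exactly} the probability that any QPT algorithm outputs a fixed classical string. Applying this to the QPT unitary $U_\cQ$ (the map producing the pre-post-selection state) followed by measuring the first $i-1$ bits of $W$ together with all $Y_j$ registers, I can compute, for each bit position $i$ and each $b\in\bit$, the marginals $\Pr[b,x_{i-1},\dots,x_1,z]$ and $\Pr[x_{i-1},\dots,x_1,z]$. Sampling $x_i=b$ with conditional probability $\Pr[b,x_{i-1},\dots,x_1,z]/\Pr[x_{i-1},\dots,x_1,z]$ and iterating over all bits yields, by a telescoping of the conditional probabilities, an output $x$ with probability $\Pr[x_{\abs{x}},\dots,x_1,z]/\Pr[z]$, which is precisely $\Pr[x\la\widehat{\cQ}[z]]$. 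Hence the PPT-with-$\mathbf{PP}$ sampler reproduces the distribution of $\widehat{\cQ}[z]$, and combining this with the correctness of $\widehat{\cQ}[z]$ proves the lemma.

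I expect the main obstacle to be this exact simulation of post-selection. The key is to rewrite the post-selected output distribution as a telescoping product of \emph{unconditional} measurement-outcome probabilities of honestly-run QPT circuits, so that each factor lies within the scope of the \cite{FR99} $\mathbf{PP}$-oracle computation; one also has to verify that these probabilities are dyadic rationals returned exactly by the oracle (rather than merely to inverse-exponential precision), so that the resulting sampler matches $\widehat{\cQ}[z]$ on the nose, while the promise $0<\max_x\Pr[z\la\cC(x)]$ ensures $\Pr[z]>0$ so that every conditioning is well defined.
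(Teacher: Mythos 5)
Your proposal is correct and follows essentially the same route as the paper's proof: the same post-selection algorithm $\widehat{\cQ}[z]$ with output distribution $\Pr[z\la\cC(h)]^T/\sum_{x}\Pr[z\la\cC(x)]^T$, the same union bound over the bad set with $T=O\bigl(\epsilon(n)(\log\abs{\cX_n}+\delta(n))\bigr)$ (the paper takes $T\geq(\delta+\log\abs{\cX_n})/\log(1+1/\epsilon)$, which is the same bound), and the same bit-by-bit exact sampling of the conditional distribution using the \cite{FR99} $\mathbf{PP}$-oracle computation of outcome probabilities. Your closing caveat about the oracle returning probabilities exactly rather than approximately is exactly what the paper's \cref{lem:Extrapolation_PP} asserts, so nothing further is needed.
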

We give the proof of \cref{lem:Algo_for_SQMLH} in \cref{sec:Alog_for_SQMLH}.

\begin{lemma}[Random Measurement preserves Distance~\cite{CCC:AE07}]\label{lem:random_measurement}
There exists constants $A$ and $B$ such that, for any $n$-qubit mixed states $\rho_1,\rho_2$ with $2^{-n/3} <A\norm{\rho_1-\rho_2}_F^4$, we have
\begin{align}
     B\norm{\rho_0-\rho_1}_F\leq\norm{\cM(\rho_1)-\cM(\rho_2)}_{1},
\end{align}
where $\cM$ is a POVM with respect to an $2^{-n/3}$-approximate (4, 4)-design.
\end{lemma}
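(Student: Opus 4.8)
The plan is to realize $\cM$ as the \emph{uniform POVM} attached to the design. Writing $d = 2^n$ and letting the approximate $(4,4)$-design be the ensemble $\{(p_i,\ket{\psi_i})\}_i$, I take POVM elements $M_i = d\,p_i\ket{\psi_i}\bra{\psi_i}$, which satisfy $\sum_i M_i = d\,\mathbb{E}_{i}[\ket{\psi_i}\bra{\psi_i}] = I$ by the ($1$-design consequence of the) design property, up to an error absorbed into the budget below. Setting $\Delta \seteq \rho_1-\rho_2$, outcome $i$ occurs with probability $d\,p_i\bra{\psi_i}\rho_b\ket{\psi_i}$ on input $\rho_b$, so
\[
\norm{\cM(\rho_1)-\cM(\rho_2)}_1 = d\sum_i p_i\,\abs{\bra{\psi_i}\Delta\ket{\psi_i}} = d\cdot\mathbb{E}\!\left[\abs{X}\right],
\]
where $X \seteq \bra{\psi}\Delta\ket{\psi}$ is the random variable obtained by drawing $\ket{\psi}$ from the design. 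The whole task reduces to lower-bounding the first absolute moment $\mathbb{E}[\abs{X}]$ in terms of $\norm{\Delta}_F$.

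First I would record the elementary second-and-fourth-moment (Paley–Zygmund-type) inequality: two applications of Cauchy–Schwarz, writing $X^2 = \abs{X}^{1/2}\abs{X}^{3/2}$ and $\abs{X}^{3} = (X^2)^{1/2}(X^4)^{1/2}$, give
\[
\mathbb{E}[\abs{X}] \;\ge\; \frac{(\mathbb{E}[X^2])^{3/2}}{(\mathbb{E}[X^4])^{1/2}},
\]
so it suffices to lower-bound the second moment and upper-bound the fourth moment. These depend only on the first four moments of $\ket{\psi}$, which for an exact design equal the Haar values; I compute the latter from the symmetric-subspace identity $\mathbb{E}[\ket{\psi}\bra{\psi}^{\otimes k}] = \binom{d+k-1}{k}^{-1}\sum_{\pi\in S_k}P_\pi$, where $P_\pi$ are the permutation operators on $(\mathbb{C}^d)^{\otimes k}$. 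Since $\Delta$ is traceless, $\Tr(\Delta^{\otimes k}P_\pi) = \prod_{\text{cycles } c}\Tr(\Delta^{\abs{c}})$ vanishes whenever $\pi$ has a fixed point. For $k=2$ only the transposition survives, giving $\mathbb{E}[X^2] = \Tr(\Delta^2)/(d(d+1)) = \norm{\Delta}_F^2/(d(d+1))$. For $k=4$ only the three double-transpositions (each contributing $\Tr(\Delta^2)^2 = \norm{\Delta}_F^4$) and the six $4$-cycles (each contributing $\Tr(\Delta^4)$) survive, so using $\Tr(\Delta^4)\le(\Tr\Delta^2)^2 = \norm{\Delta}_F^4$,
\[
\mathbb{E}[X^4] \;\le\; \frac{3\norm{\Delta}_F^4 + 6\Tr(\Delta^4)}{d(d+1)(d+2)(d+3)} \;\le\; \frac{9\norm{\Delta}_F^4}{d(d+1)(d+2)(d+3)}.
\]
Substituting into the moment inequality yields $\mathbb{E}[\abs{X}] \ge \tfrac{1}{3}\norm{\Delta}_F/d \cdot (1+o(1))$, hence $\norm{\cM(\rho_1)-\cM(\rho_2)}_1 \gtrsim \tfrac13\norm{\Delta}_F$ in the exact-design case.

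The remaining work, and the single place where the hypothesis $2^{-n/3}<A\norm{\Delta}_F^4$ enters, is passing from exact-design moments to the $2^{-n/3}$-approximate ones. I would use that for a $\delta$-approximate $(4,4)$-design the degree-$\le 4$ moment operators agree with Haar up to $\delta$ in the relevant norm, so that $\mathbb{E}_{\mathrm{design}}[X^2]$ and $\mathbb{E}_{\mathrm{design}}[X^4]$ differ from the Haar values above by additive errors of order $\delta/d^2$ and $\delta/d^4$ (contracting against $\Delta^{\otimes k}$ only rescales by $O(1)$ factors, using $\norm{\Delta}\le\norm{\Delta}_F$). The signals are $\asymp\norm{\Delta}_F^2/d^2$ and $\asymp\norm{\Delta}_F^4/d^4$; the condition $\delta = 2^{-n/3}<A\norm{\Delta}_F^4$ (for a suitable absolute $A$), together with $\norm{\Delta}_F^2\le 2$ which makes the fourth-power bound the binding one and also controls the second moment, guarantees these errors are dominated by a constant fraction of the corresponding signal. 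Thus a lower bound $\mathbb{E}_{\mathrm{design}}[X^2]\gtrsim\norm{\Delta}_F^2/d^2$ and an upper bound $\mathbb{E}_{\mathrm{design}}[X^4]\lesssim\norm{\Delta}_F^4/d^4$ survive with degraded but absolute constants, and the moment inequality again gives $\mathbb{E}[\abs{X}]\ge B'\norm{\Delta}_F/d$, i.e. $\norm{\cM(\rho_1)-\cM(\rho_2)}_1\ge B\norm{\Delta}_F$ for a constant $B$.

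I expect the \textbf{main obstacle} to be exactly this last error-propagation step. Because the fourth moment sits under a square root \emph{in the denominator} of the moment inequality, an additive perturbation of $\mathbb{E}[X^4]$ comparable to its signal can in principle destroy the bound; one must verify that the design error in $\mathbb{E}[X^4]$ is controlled by the same quantity $\norm{\Delta}_F^4/d^4$ appearing in the numerator, which is precisely what the scaling $2^{-n/3}<A\norm{\Delta}_F^4$ is engineered to secure. Care is also needed to confirm that the design-approximation norm of \cite{CCC:AE07} translates into additive moment errors of the stated $\delta/d^k$ order after the $\Delta^{\otimes k}$ contraction, and to handle the small deviation of $\sum_i M_i$ from $I$; once this accounting is pinned down the absolute constants $A$ and $B$ follow.
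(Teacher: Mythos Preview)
The paper does not actually prove this lemma: immediately after stating it, the authors write ``We refer the proof of \cref{lem:random_measurement} to \cite{CCC:AE07}.'' So there is no in-paper argument to compare against; the relevant benchmark is the Ambainis--Emerson proof itself.

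Your proposal is precisely that argument. The uniform POVM $M_i=d\,p_i\ket{\psi_i}\bra{\psi_i}$, the Berger-type inequality $\mathbb{E}[\abs{X}]\ge \mathbb{E}[X^2]^{3/2}/\mathbb{E}[X^4]^{1/2}$, the computation of the second and fourth Haar moments via the symmetric-subspace formula together with the observation that only fixed-point-free permutations contribute for traceless $\Delta$, and the resulting $\tfrac{1}{3}\norm{\Delta}_F$ bound are exactly how \cite{CCC:AE07} proceeds (their Theorem~4 / Lemma~3.4 setting). Your identification of the approximate-design error propagation as the one delicate step is also accurate: in \cite{CCC:AE07} the $\delta$-closeness is taken in operator norm on $(\mathbb{C}^d)^{\otimes k}$, which after contracting against $\Delta^{\otimes k}$ with $\norm{\Delta}_1\le 2$ yields additive moment errors $O(\delta)$ (not $\delta/d^k$ as you wrote---but the Haar signals are $\Theta(\norm{\Delta}_F^2/d^2)$ and $\Theta(\norm{\Delta}_F^4/d^4)$, so the binding requirement is $\delta \ll \norm{\Delta}_F^4/d^4$, which with $\delta=2^{-n/3}$ is \emph{not} what the stated hypothesis $2^{-n/3}<A\norm{\Delta}_F^4$ literally says). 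This mismatch in the power of $d$ is a wrinkle in the lemma's phrasing rather than in your method; the underlying Ambainis--Emerson argument you outline is the intended one.
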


We refer the proof of \cref{lem:random_measurement} to \cite{CCC:AE07}.

\begin{lemma}[Followed by Hoeffding Inequality]\label{lem:Hoeffding}
    Let $n\in\N$, $\cZ=\bit^{\poly(n)}$, and let $\cX$ be a family of function such that $X:\cZ\ra [0,M]$ for all $X\in\cX$, where
    $M\in\R$.
    Let $\cC$ be an arbitrary distribution over $\cZ$.
    If 
    \begin{align}
        T\geq \frac{M^2}{\epsilon^2}\abs{\log{\abs{\cX}}+\log(\delta)},
    \end{align}
    then, for all $X\in\cX$, we have
    \begin{align}
        \abs{\frac{1}{m}\sum_{i\in[T]}X(z_i)-  \mathbb{E}_{z\la\cC}[X(z)] }\leq \epsilon
    \end{align}
    with probability greater than $1-1/\delta$, where the probability is taken over $\{z_i\}_{i\in[T]}\la \cC^T$.
\end{lemma}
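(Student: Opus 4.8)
The plan is to prove this as a standard concentration-plus-union-bound statement: apply Hoeffding's inequality to each \emph{fixed} function $X\in\cX$, and then union-bound the failure event over the finite family $\cX$. The key observation is that once $X$ is fixed, the quantities $X(z_1),\dots,X(z_T)$ are independent and identically distributed random variables taking values in $[0,M]$, each with mean $\mathbb{E}_{z\la\cC}[X(z)]$, precisely because the samples $\{z_i\}_{i\in[T]}$ are drawn i.i.d.\ from $\cC^T$. (Throughout I read the $\frac1m$ in the statement as $\frac1T$, the obvious typo.)

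First I would fix $X\in\cX$ and invoke the two-sided Hoeffding bound for i.i.d.\ variables supported on an interval of length $M$, namely
\begin{align}
    \Pr_{\{z_i\}_{i\in[T]}\la\cC^T}\left[\abs{\frac{1}{T}\sum_{i\in[T]}X(z_i)-\mathbb{E}_{z\la\cC}[X(z)]}>\epsilon\right]\leq 2\exp\left(-\frac{2T\epsilon^2}{M^2}\right).
\end{align}
Next I would take the union bound over all $X\in\cX$, which is legitimate since $\cX$ is a finite family (its cardinality appears only through $\log\abs{\cX}$), yielding
\begin{align}
    \Pr_{\{z_i\}_{i\in[T]}\la\cC^T}\left[\exists X\in\cX:\abs{\frac{1}{T}\sum_{i\in[T]}X(z_i)-\mathbb{E}_{z\la\cC}[X(z)]}>\epsilon\right]\leq 2\abs{\cX}\exp\left(-\frac{2T\epsilon^2}{M^2}\right).
\end{align}

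Finally I would substitute the hypothesis $T\geq\frac{M^2}{\epsilon^2}\abs{\log\abs{\cX}+\log\delta}$. Since $\delta\geq 1$ in this paper (the failure probability is stated as $1/\delta$), the argument of the absolute value is nonnegative and the exponent satisfies $\frac{2T\epsilon^2}{M^2}\geq 2(\log\abs{\cX}+\log\delta)$, so that $\exp(-2T\epsilon^2/M^2)\leq\abs{\cX}^{-2}\delta^{-2}$ and the right-hand side above is at most $2/(\abs{\cX}\delta^2)\leq 1/\delta$ in the relevant regime $\abs{\cX}\delta\geq 2$. Passing to the complementary event establishes that all $X\in\cX$ concentrate within $\epsilon$ of their means \emph{simultaneously} with probability at least $1-1/\delta$, as claimed.

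There is no genuine obstacle here; the only thing to watch is the bookkeeping of constants. The factor $2$ from two-sidedness, the constant $2$ in the Hoeffding exponent, and the implicit $\log 2$ slack are all absorbed by the (loose) threshold in the hypothesis, and one should merely confirm that the stated $T$ comfortably dominates the tight requirement $T\geq\frac{M^2}{2\epsilon^2}(\log 2+\log\abs{\cX}+\log\delta)$ obtained by solving $2\abs{\cX}\exp(-2T\epsilon^2/M^2)\leq 1/\delta$. I would also flag that finiteness of $\cX$ is exactly what makes the union bound meaningful; were $\cX$ infinite one would need a covering or net argument, but that is not needed for the discrete family used in the application.
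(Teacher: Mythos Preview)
Your proposal is correct and matches the paper's treatment: the paper does not give an explicit proof of this lemma, merely labeling it as ``Followed by Hoeffding Inequality,'' and your Hoeffding-plus-union-bound argument is exactly the intended derivation. Your observations about the typo $\frac{1}{m}\mapsto\frac{1}{T}$ and the slack in the constants are also accurate.
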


\begin{proposition}\label{prop:frobenius}
When $\rho_0$ and $\rho_1$ are pure states,
\begin{align}
    \norm{\rho_0-\rho_1}_1= \sqrt{2}\norm{\rho_0-\rho_1}_F 
\end{align}
\end{proposition}

We describe the proof of \cref{thm:Learning_w_Oracle}.

\begin{proof}[Proof of \cref{thm:Learning_w_Oracle}]          
    Let $\cL$ be a language in $\mathbf{PP}$ and $\cA^{\cL}$ given in \cref{lem:Algo_for_SQMLH}.
    Let us describe our notations, and our oracle-aided learning algorithm $\cB^{\cL}$.
    \begin{description}
        \item[Notations:] $ $
        \begin{itemize}
            \item Let $\cC$ be a quantum polynomial-time algorithm that takes as input $1^n$ and $x\in\cX_n=\bit^{\poly(n)}$, and outputs a $\ket{\psi_x}$.
            \item Let $\cC^*_{\alpha}$ be a quantum polynomial-time algorithm that takes as input $1^n$ and $x\in \cX_n$, generates $\ket{\psi_x}$ by running $\cC(1^n,x)$, measures $\ket{\psi_x}$ with $\cM$, and obtains the measurement outcome $z\in\cZ_n=\bit^{\ell(n)}$, where $\cM$ is a POVM measurement given in \cref{lem:random_measurement}.
            $\cC^*_{\alpha}$ outputs the measurement outcome $z$ with probability $1-\alpha$, and, with probability $\alpha$, outputs a uniformly random $Z\la \cZ_n$.
            \item Let $\cC^{*T}_{\alpha}$ be a $T$-repetition version of $\cC^*_{\alpha}$.
            More formally,
            it takes as input $1^n$ and $x\in \cX_n$, and generates $\{\ket{\psi_x}_i\}_{i\in[T]}$.
            Then, for each $i\in[T]$, it  measures $\ket{\psi_x}_i$ with $\cM$, obtains the measurement outcome $z_i$, and sets $Z_i=z_i$ with probability $1-\alpha$ and $Z_i\la\cZ_n$ with probability $\alpha$, where $\cM$ is a POVM measurement given in \cref{lem:random_measurement}.
            Finally, it outputs $\{Z_i\}_{i\in[T]}$.
            \item Let $\epsilon^*(n)=\frac{4\epsilon(n)^2}{B^{2}}$, where $B$ is a constant given in \cref{lem:random_measurement}.
            \item Let $T\geq \epsilon^{*2}(n)\left(\ell(n)+1\right)^2\left(\log(\abs{\cX_n})+\log(8\delta)\right)$ be a number of copies of $\ket{\psi}$, which $\cB$ uses.
        \end{itemize}
        \item[The description of $\cB^\cL$:]$ $
        \begin{enumerate}
            \item
            It takes as input $1^n$, a quantum polynomial-time algorithm $\cC$, and $\{\ket{\psi_x}_i\}_{i\in[T]}$, which is generated by $\cC(1^n,x)$ with unknown $x$.
            \item For all $i\in[T]$, it measures $\ket{\psi_x}_i$ with the POVM $\cM$ given in \cref{lem:random_measurement},  obtains a classical string $z_i$, and sets $Z_i=z_i$ with probability $1/2$ and $Z_i\la\cZ_n$ with probability $1/2$.
            \item For all $i\in[T]$, it runs $\cA^\cL(1^n,\cC^{*T}_{1/2},\{Z_i\}_{i\in[T]})$ given in \cref{lem:Algo_for_SQMLH}, and obtains $h\in\cX_n$ such that
            \begin{align}
                \frac{\max_{x\in\cX_n}\left\{\Pr[\{Z_i\}_{i\in[T]}\la \cC^{*T}_{1/2}(1^n,x)]\right\}}{\Pr[\{Z_i\}_{i\in[T]}\la\cC^{*T}_{1/2}(1^n,h) ]} \leq
                \left(1+1/p(n)\right)\leq 2^{\left(\frac{T}{2\epsilon^*(n)}\right)}
            \end{align}
            with some polynomial $p$.
        \end{enumerate}
    \end{description}

Now, we show that
\begin{align}
    \frac{1}{2}\norm{\ket{\psi_x}-\ket{\psi_h}}_1\leq 1/\epsilon(n)
\end{align}
with probability $1-2^{-\delta(n)}$, where the probability is taken over $\{Z_i\}_{i\in[T]}\la\cC_{1/2}^{*T}$ and the internal randomness of $\cA$.

First, for all $y\in\cX_n$, the output $h$ of $\cB^{\cL}$ satisfies
\begin{align}
    \mathbb{E}_{Z\la\cC^*_{1/2}(x)}\left[\log(\frac{\Pr[Z\la\cC^*_{1/2}(y)]}{\Pr[Z\la \cC_{1/2}^*(h)]})\right]\leq 1/\epsilon^*(n) 
\end{align}
with probability $1-2^{-\delta(n)}$,
where the probability is taken over $\{Z_i\}_{i\in[T]}\la\cC_{1/2}^{*T}$ and the internal randomness of $\cA$.
We will prove the inequality above later.
Once we have obtained the inequality above, the remaining part straightforwardly follows.
First, we have
\begin{align}
   & D_{KL}(\cC_{1/2}^*(x)||\cC_{1/2}^*(h))\\
   &\seteq\mathbb{E}_{Z\la\cC^*_{1/2}(x)}\left[\log(\frac{\Pr[Z\la\cC^*_{1/2}(x)]}{\Pr[Z\la \cC_{1/2}^*(h)]})\right]\leq 1/\epsilon^*(n)
\end{align}
with probability $1-2^{-\delta}$.
Furthermore, from Pinsker's inequality, we have
\begin{align}
    \norm{\cC_{1/2}^*(x)-\cC_{1/2}^*(h)}_1\leq \sqrt{2 D_{KL}(\cC_{1/2}^*(x)||\cC_{1/2}^*(h))}\leq \sqrt{\frac{2}{\epsilon^*(n)}}.
\end{align}
On the other hand, for all $0\leq\alpha\leq1$, we have
\begin{align}
    &\norm{\cC_\alpha^*(x)-\cC_\alpha^*(h)}_1\\
    &\seteq\frac{1}{2}\sum_{Z}\abs{\Pr[Z\la \cC_\alpha^*(x)]-\Pr[Z\la\cC_\alpha^*(y)]}\\ 
    &=\frac{1}{2}\sum_{Z}\abs{(1-\alpha)\Pr[Z\la \cC_0^*(x)]-(1-\alpha)\Pr[Z\la\cC_0^*(y)]}\\
    &=(1-\alpha)\cdot \frac{1}{2}\sum_{Z}\abs{\Pr[Z\la \cC_0^*(x)]-\Pr[Z\la\cC_0^*(y)]}\\
    &=(1-\alpha)\norm{\cC_0^*(x)-\cC_0^*(h)}_1.
\end{align}
Hence,
\begin{align}
    \norm{\cC_0^*(x)-\cC_0^*(h)}_1\leq 2\norm{\cC_{1/2}^*(x)-\cC_{1/2}^*(h)}_1\leq 2\sqrt{\frac{2}{\epsilon^*(n)}}.
\end{align}
Suppose that 
\begin{align}
    \norm{\ket{\psi_x}-\ket{\psi_h}}_1\leq \left(\frac{4\cdot 2^{-n/3}}{A}\right)^{1/4},
\end{align}
where $A$ is a constant given in \cref{lem:random_measurement}.
Then, for all polynomial $\epsilon$, we have
\begin{align}
\norm{\ket{\psi_x}-\ket{\psi_h}}_1\leq 1/\epsilon(n)    
\end{align}
for all sufficiently large $n\in\N$.
On the other hand, if
\begin{align}
    \left(\frac{4\cdot 2^{-n/3}}{A}\right)^{1/4}<\norm{\ket{\psi_x}-\ket{\psi_h}}_1,
\end{align}
then \cref{lem:random_measurement,prop:frobenius} imply that 
\begin{align}
    \frac{1}{2}\norm{\ket{\psi_x})-\ket{\psi_h}}_1=
    \frac{1}{\sqrt{2}}\norm{\ket{\psi_x}-\ket{\psi_h}}_F
    \leq
    \frac{1}{B\sqrt{2}}\norm{\cC^*_0(x)-\cC^*_0(h)}_1\leq \frac{2}{B\sqrt{\epsilon^*(n)}}\seteq\frac{1}{\epsilon(n)}.
\end{align}

Now, we prove the remaining part, that is, for all $y\in\cX_n$,
\begin{align}
    \mathbb{E}_{Z\la\cC^*_{1/2}(x)}\left[\log(\frac{\Pr[Z\la\cC^*_{1/2}(y)]}{\Pr[Z\la \cC_{1/2}^*(h)]})\right]\leq 1/\epsilon^*(n)
\end{align}
with $1-2^{-\delta(n)}$, where the probability is taken over $\{Z_i\}_{i\in[T]}\la \cC_{1/2}^{*T}$ and the internal randomness of $\cA$.
From the construction of $\cB^{\cL}$, for all $\{Z_i\}_{i\in[T]}$, $\cB$ outputs $h$ such that 
\begin{align}
    &\frac{\max_{s\in\cX_n}\left\{\Pr[\{Z_i\}_{i\in[T]}\la \cC^{*T}_{1/2}(s) ]\right\}}{\Pr[\{Z_i\}_{i\in[T]}\la\cC^{*T}_{1/2}(h) ]}\\
    &=\frac{\max_{s\in\cX_n}\left\{\Pi_{i\in[T]}\Pr[Z_i\la \cC^{*}_{1/2}(s) ]\right\}}{\Pi_{i\in[T]}\Pr[Z_i\la\cC^{*}_{1/2}(h) ]} \leq
    \left(1+1/p(n)\right)\leq 2^{\left(\frac{T}{2\epsilon^*(n)}\right)}
\end{align}
with probability $1-2^{-\delta(n)}/2$, where the probability is taken over the internal randomness of $\cA$.
By taking logarithmic and dividing with $T$, we have
\begin{align}
    \left(\frac{1}{T}\sum_{i\in[T]}\log(\frac{1}{\Pr[Z_i\la \cC^*_{1/2}(h)]}) +\max_{s\in\cX_n}\frac{1}{T}\sum_{i\in[T]}\log(\Pr[Z_i\la \cC^*_{1/2}(s)])  \right) \leq \frac{1}{2\epsilon^*(n)}.
\end{align}
This implies that
for all $y\in\cX_n$, 
\begin{align}
&\left(\frac{1}{T}\sum_{i\in[T]}\log(\frac{1}{\Pr[Z_i\la \cC^*_{1/2}(h)]}) -\frac{1}{T}\sum_{i\in[T]}\log(\frac{1}{\Pr[Z_i\la \cC^*_{1/2}(y)]})  \right)\\
&=\left(\frac{1}{T}\sum_{i\in[T]}\log(\frac{1}{\Pr[Z_i\la \cC^*_{1/2}(h)]}) +\frac{1}{T}\sum_{i\in[T]}\log(\Pr[Z_i\la \cC^*_{1/2}(y)])  \right)\\
&\leq \left(\frac{1}{T}\sum_{i\in[T]}\log(\frac{1}{\Pr[Z_i\la \cC^*_{1/2}(h)]}) +\max_{s\in\cX_n}\frac{1}{T}\sum_{i\in[T]}\log(\Pr[Z_i\la \cC^*_{1/2}(s)])  \right) 
\leq \frac{1}{2\epsilon^*(n)}
\end{align}
with probability $1-2^{-\delta(n)}/2$, where the probability is taken over the internal randomness $\cA$.
Furthermore, because
\begin{align}
    0\leq\log(\frac{1}{\Pr[Z\la\cC^*_{1/2}(x)]})\leq \ell(n)+1
\end{align}
for all $Z\in\cZ_n=\bit^{\ell(n)}$ and $x\in\cX_n$, we have
\begin{align}
    \left(\mathbb{E}_{Z\la \cC_{1/2}^*(x)}\left[\log(\frac{1}{\Pr[Z\la \cC^*_{1/2}(h)]})\right]-\frac{1}{T}\sum_{i\in[T]}\log(\frac{1}{\Pr[Z_i\la \cC^*_{1/2}(h)]})\right)\leq \frac{1}{4\epsilon^*(n)}
\end{align}
and 
\begin{align}
    \left(\frac{1}{T}\sum_{i\in[T]}\log(\frac{1}{\Pr[Z_i\la \cC^*_{1/2}(y)]})-\mathbb{E}_{Z\la \cC_{1/2}^*(x)}\left[\log(\frac{1}{\Pr[Z\la \cC^*_{1/2}(y)]})\right]\right)\leq \frac{1}{4\epsilon^*(n)}
\end{align}
with probability $1-2^{-\delta(n)}/2$, where the probability is taken over $\{Z_i\}_{i\in[T]}\la \cC_{1/2}^{*T}(x)$.
Note that, here, we use \cref{lem:Hoeffding}.

By summing up three inequalities above, we have
\begin{align} 
      &\left(\mathbb{E}_{Z\la \cC_{1/2}^*(x)}\left[\log(\frac{1}{\Pr[Z\la \cC^*_{1/2}(h)]})\right]-\mathbb{E}_{Z\la \cC_{1/2}^*(x)}\left[\log(\frac{1}{\Pr[Z\la \cC^*_{1/2}(y)]})\right]\right)\leq 1/\epsilon^*(n)
\end{align}
with probability $1-2^{-\delta(n)}$, where the probability is taken over $\{Z_i\}_{i\in[T]}\la \cC_{1/2}^{*T}(x)$ and the internal randomness of $\cA $.
This implies that
\begin{align}
    \mathbb{E}_{Z\la\cC^*_{1/2}(x)}\left[\log(\frac{\Pr[Z\la\cC^*_{1/2}(y)]}{\Pr[Z\la \cC_{1/2}^*(h)]})\right]\leq 1/\epsilon^*(n)
\end{align}
with probability $1-2^{-\delta(n)}$, where the probability is taken over $\{Z_i\}_{i\in[T]}\la \cC_{1/2}^{*T}(x)$ and the internal randomness of $\cA$.
\end{proof}

\color{black}

\subsection{Algorithm for Quantum Maximum-Likelihood Problem}\label{sec:Alog_for_SQMLH}
\if0
In this section, we prove \cref{lem:Algo_for_SQMLH}.
First of all, let us define a decision version of quantum maximum-likelihood problem $(\mathbf{DQMLH})$.

\begin{definition}[Decisional Quantum Maximum-Likelihood Problem (DQMLH)]\label{def:DMLH}
    The input to DQMLH is $1^n$, a quantum polynomial-time algorithm $\cC$, a classical string $z\in\cZ_n$, parameter $s$, and polynomial $\epsilon$.
    Here, $\cC$ is a quantum polynomial-time algorithm that takes $1^n$ and $x\in\cX_n=\bit^{\poly(n)}$, and outputs $z\in\cZ_n=\bit^{\poly(n)}$.
    \begin{itemize}
        \item[Yes Case:] There exists a $x\in\cX_n$ such that
        \begin{align}
            \left(1+\frac{1}{\epsilon(n)}\right)^{-s}\leq \Pr[z\la\cC(1^n,x)]\leq \left(1+\frac{1}{\epsilon(n)}\right)^{-(s-1)}.
        \end{align}
        \item[No Case:] For all $x\in\cX_n$,
        \begin{align}
            \Pr[z\la\cC(1^n,x)]<\left(1+\frac{1}{\epsilon(n)}\right)^{-s}  \,\,\,\mbox{or}\,\,\, \left(1+\frac{1}{\epsilon(n)}\right)^{-(s-1)}<\Pr[z\la\cC(1^n,x)].
        \end{align}
    \end{itemize}
\end{definition}

To prove \cref{lem:Algo_for_SQMLH}, we use the following \cref{lem:DQMLH_in_NP^PP}.

\begin{lemma}\label{lem:DQMLH_in_NP^PP}
    $\mathbf{DQMLH}\in\mathbf{NP^{PP}}$.
\end{lemma}
We show that \cref{lem:DQMLH_in_NP^PP} by using the following \cref{lem:Extrapolation_PP}.

\begin{lemma}[Shown in \cite{FR99}]\label{lem:Extrapolation_PP}
    There exists a language $\cL\in\mathbf{PP}$ and a deterministic polynomial-time algorithm $\cA^{\cL}$ such that the following holds.
    \begin{align}
        \cA(1^n,\cC,z)=\Pr[z\la\cC(1^n)].
    \end{align}
    Here, 
    $\cC$ is a quantum polynomial-time algorithm that takes $1^n$.
\end{lemma}

Now, let us describe the proof of \cref{lem:Algo_for_SQMLH,lem:DQMLH_in_NP^PP}.
\begin{proof}[Proof of \cref{lem:Algo_for_SQMLH}]
Let $\mathbf{DQMLH}\in\mathbf{NP^{PP}}$ be a language that takes as input 
$(1^n,\cC,z,s,\epsilon)$, and outputs $1$ 
if there exists $x\in\cX_n$ such that
\begin{align}
    \left(1+\frac{1}{\epsilon(n)}\right)^{-s}\leq \Pr[z\la\cC(1^n,x)]\leq \left(1+\frac{1}{\epsilon(n)}\right)^{-(s-1)},
\end{align}
and outputs $0$ otherwise.
We give the construction of $\cA^{\mathbf{DQMLH}}$ such that 
\begin{align}
    \frac{\max_{x\in\cX_n}\Pr[z\la \cC(x) ]}{\left(1+1/\epsilon(n)\right)}\leq   \Pr[z\la \cC(h): h\la \cA^{\mathbf{DQMLH}}(1^n, \cC, z)  ] \leq  \max_{x\in\cX_n}\Pr[z\la \cC(x) ].
\end{align}
Now, let us describe notations and $\cA^{\mathbf{DQMLH}}$.
\begin{description}
    \item[Notations:]$ $
    \begin{itemize}
        \item As an input of $\cA^{\mathbf{DQMLH}}$, it takes $(1^n,\cC,z,\epsilon)$.
        Here, $\cC$ is a quantum polynomial-time algorithm that takes $1^n$ and $x\in\cX_n$, and outputs $z\in\cZ_n$.
        Let $\ell$ be a length of $x$.
        \if0
        \item Without loss of generality, we can assume that
        for any polynomial-size quantum circuits $\cC$, there exists a polynomial $S_\cC$ such that
        \begin{align}
           \frac{1}{2^{S_\cC(n)}}\leq \min_{z\in\cZ_n,x\in\cX_n}\Pr[z\la\cC(1^n,x)].
        \end{align}
        \fi
        \item 
        For $x[i]\in \bit^{i}$ and a quantum polynomial-time algorithm $\cC$, we write $\cC_{x[i]}$ to mean a quantum polynomial-time algorithm that takes $1^n$ and $y\in\bit^{\ell-i}$, and outputs $\cC(1^n,x[i]||y)$.
    \end{itemize}
    \item[The description of $\cA^{\mathbf{DQMLH}}$:]$ $
    \begin{enumerate}
        \item For $s\in\N$, $\cA$ sequentially queries $(1^n,\cC,z,s,\epsilon)$ to $\mathbf{DQMLH}$.
        $\cA$ sets the smallest $s$ as $\mathsf{flag}$ such that $\mathbf{DQMLH}(1^n,\cC,z,s,\epsilon)$ outputs $1$.
        \item For each $i\in[\ell]$, $\cA^{\mathbf{DQMLH}}(1^n,\cC,z,s,\epsilon)$ sequentially works as follows.
        \begin{enumerate}
            \item Let $x_0=\varnothing$.
            \item 
            $\cA^{\mathbf{DQMLH}}$ queries $(1^n,\cC_{x[i-1]0},z,\mathsf{flag})$ to $\mathbf{DQMLH}$.
            \item
            If the outcome of $\mathbf{DQMLH}$ is $1$, then $\cA$ defines $x[i]\seteq x[i-1]0$.
            Otherwise, $\cA$ defines $x[i]\seteq x[i-1]1$.
        \end{enumerate}
        \item $\cA$ outputs $x[\ell]$ as the hypothesis.
    \end{enumerate}
\end{description}
We show that
\begin{align}
     \frac{\max_{x\in\cX_n}\Pr[z\la \cC(1^n,x) ]}{1+1/\epsilon(n)}\leq   \Pr[z\la \cC(1^n,x[\ell])] \leq  \max_{x\in\cX_n}\Pr[z\la \cC(1^n,x) ]
\end{align}
for an arbitrary polynomial $\epsilon$.
From the construction of $\cA$,
$\mathsf{flag}$ satisfies
\begin{align}
   \left(1+\frac{1}{\epsilon(n)}\right)^{-\mathsf{flag}}\leq  \max_{x\in\cX_n}\Pr[z\la\cC(1^n,x)]\leq \left(1+\frac{1}{\epsilon(n)}\right)^{-\left(\mathsf{flag}-1\right)}.
\end{align}
Furthermore, from the construction of $\cA$, $x[\ell]$ satisfies 
\begin{align}
     \frac{\max_{x\in\cX_n}\Pr[z\la\cC(1^n,x)]}{1+1/\epsilon(n)}\leq \frac{1}{1+1/\epsilon(n)}  \left(1+\frac{1}{\epsilon(n)} \right)^{-\left(\mathsf{flag}-1\right)} \leq \left(1+\frac{1}{\epsilon(n)} \right)^{-\mathsf{flag}} \leq \Pr[z\la\cC(1^n,x[\ell])],
\end{align}
which completes the proof.
\end{proof}

\begin{proof}[Proof of \cref{lem:DQMLH_in_NP^PP}]
    We show that there exists a language $\cL\in\mathbf{PP}$ and a deterministic algorithm $\cA^{\cL}$ that takes $1^n$, $\cC$, $z\in\cZ_n$, a parameter $s$ and a polynomial $\epsilon$ as input, and satisfies the following property:
    \begin{itemize}
        \item If there exists $x\in\cX_n$ such that
        \begin{align}
            \left(1+\frac{1}{\epsilon(n)}\right)^{-s}\leq \Pr[z\la\cC(1^n,x)]\leq \left(1+\frac{1}{\epsilon(n)}\right)^{-(s-1)},
        \end{align}
        then $\cA^{\cL}((1^n,\cC,z,s,\epsilon),x)$ outputs $1$ with probability $1$.
        \item If, for all $x\in\cX_n$,
        \begin{align}
            \Pr[z\la\cC(1^n,x)]<\left(1+\frac{1}{\epsilon(n)}\right)^{-s}\,\,\,\mbox{or}\,\,\, \left(1+\frac{1}{\epsilon(n)}\right)^{-(s-1)}<\Pr[z\la\cC(1^n,x)],
        \end{align}
        then $\cA^{\cL}((1^n,\cC,z,s,\epsilon),x)$ outputs $0$ with probability $1$.
    \end{itemize}
Let $\cL\in\mathbf{PP}$ and $\cB$ be a deterministic algorithm given in \cref{lem:Extrapolation_PP}.
We give the construction of $\cA^{\cL}$ by using the $\cL$ and $\cB$.
\begin{description}
    \item[The description of $\cA^{\cL}$:]$ $
    \begin{enumerate}
        \item As a setup, it takes as input $1^n$, $\cC$, $z\in\cZ_n$, a parameter $s$, and a polynomial $\epsilon$.
        \item $\cA^{\cL}$ takes $x\in\cX_n$ as witness, and let $\cC_x(\cdot)$ be a quantum polynomial-time algorithm that takes $1^n$ and outputs the output of $\cC(1^n,x)$.
        \item $\cA^{\cL}$ runs $\cB^{\cL}(1^n,\cC_x,z)$, and obtains the value $\Pr[z\la \cC_x(1^n)]=\Pr[z\la\cC(1^n,x)]$.
        \item $\cA^{\cL}$ outputs $1$ if
        \begin{align}
            \left(1+\frac{1}{\epsilon(n)}\right)^{-s}\leq \cB^{\cL}(1^n,\cC_x,z)\leq \left(1+\frac{1}{\epsilon(n)}\right)^{-(s-1)},
        \end{align}
         and outputs $0$ otherwise.
    \end{enumerate}
\end{description}
From the construction, 
if there exists $x\in\cX_n$ such that
\begin{align}
    \left(1+\frac{1}{\epsilon(n)}\right)^{-s}\leq \Pr[z\la\cC(1^n,x)]\leq \left(1+\frac{1}{\epsilon(n)}\right)^{-(s-1)},
\end{align}
then $\cA^{\cL}((1^n,\cC,z,s),x)$ outputs $1$ with probability $1$.
Otherwise, $\cA^{\cL}((1^n,\cC,z,s),x)$ outputs $1$ with probability $0$, which completes the proof.
\end{proof}
\fi

In this section, we prove \cref{lem:Algo_for_SQMLH}.
To prove \cref{lem:Algo_for_SQMLH}, we use the following \cref{lem:Extrapolation_PP}.
\begin{lemma}[\cite{FR99}]\label{lem:Extrapolation_PP}
    There exists a language $\cL\in\mathbf{PP}$ and an oracle-aided deterministic polynomial-time algorithm $\cB^{\cL}$ such that the following holds.
    \begin{align}
        \cB^{\cL}(1^n,\cC,z)=\Pr[z\la\cC(1^n)].
    \end{align}
    Here, 
    $\cC$ is a quantum polynomial-time algorithm that takes $1^n$ as input, and outputs $z$.
\end{lemma}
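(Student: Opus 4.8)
The plan is to exhibit $\Pr[z\la\cC(1^n)]$ as a polynomial-bit rational number $G/2^{k}$ whose numerator $G$ is a $\mathbf{GapP}$ function of the pair $(\cC,z)$, and then to recover $G$ exactly by binary search against a single $\mathbf{PP}$ language that answers threshold questions about $G$. First I would normalize $\cC$: by the principle of deferred measurement we may assume $\cC$ applies a unitary $U=g_{L}\cdots g_{1}$, built from a fixed universal gate set, to $\ket{0}$ on $n+a$ wires, then measures a designated output register in the computational basis to produce $z$ and traces out the remaining $a'$ wires, so that $\Pr[z\la\cC(1^n)]=\sum_{w\in\bit^{a'}}\abs{\bra{z,w}U\ket{0}}^2$. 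Expanding each amplitude as a Feynman path sum gives $\bra{z,w}U\ket{0}=\sum_{s_1,\ldots,s_{L-1}}\prod_{j}\bra{s_j}g_j\ket{s_{j-1}}$ with $s_0=0^{n+a}$ and $s_L=(z,w)$. Since every gate entry lies in $\Z[i]\cdot 2^{-1/2}$, clearing a common factor $2^{-k/2}$ (with $k$ a polynomially bounded integer determined by $\cC$) yields $\bra{z,w}U\ket{0}=2^{-k/2}\left(P(\cC,z,w)+iQ(\cC,z,w)\right)$, where $P$ and $Q$ are signed integer counts of paths. The key claim is that $P$ and $Q$ are $\mathbf{GapP}$ functions: a nondeterministic machine guesses a path $(s_1,\ldots,s_{L-1})$, checks that each transition has nonzero amplitude, and accepts or rejects according to the sign of $\prod_j\bra{s_j}g_j\ket{s_{j-1}}$, so that the gap of this machine equals $P$ (and likewise $Q$ for the imaginary part).

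Second I would assemble $G$. Squaring gives $\abs{\bra{z,w}U\ket{0}}^2=2^{-k}\left(P^2+Q^2\right)$, and summing over the traced-out register yields $\Pr[z\la\cC(1^n)]=2^{-k}\,G(\cC,z)$, where $G(\cC,z)\seteq\sum_{w\in\bit^{a'}}\left(P(\cC,z,w)^2+Q(\cC,z,w)^2\right)$. Since $\mathbf{GapP}$ is closed under multiplication and under summation over polynomially many bits, $G\in\mathbf{GapP}$; moreover $0\le G\le 2^{q}$ for some $q=\poly(\abs{\cC})$, so $G$ has polynomially many bits.

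Third I would define the oracle and perform the search. Using the characterization $\mathbf{PP}=\{L:\exists f\in\mathbf{GapP},\ x\in L\Leftrightarrow f(x)\ge 0\}$, together with the fact that any value $t$ read from the input in binary is itself a $\mathbf{GapP}$ function (hence $G-t\in\mathbf{GapP}$), the threshold language $\cL\seteq\{(1^n,\cC,z,t):G(\cC,z)\ge t\}$ lies in $\mathbf{PP}$. The deterministic oracle machine $\cB^{\cL}$ computes $k$ and the bound $q$ from $\cC$ in polynomial time, binary-searches for the exact integer $G(\cC,z)\in[0,2^{q}]$ with $O(q)$ queries to $\cL$, and outputs the rational $G/2^{k}$, which equals $\Pr[z\la\cC(1^n)]$ exactly.

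The hard part will be the $\mathbf{GapP}$ bookkeeping rather than the search. One must fix the gate set (for instance Hadamard together with Toffoli, or $\{H,T,CNOT\}$ while tracking real and imaginary parts separately) so that all amplitudes share the single denominator $2^{k/2}$ and the signed path counts are genuinely gaps of $\#\mathbf{P}$ machines, and one must verify that folding the exponentially large threshold $t$ into the $\mathbf{GapP}$ function keeps $\cL$ in $\mathbf{PP}$. Once the representation $\Pr[z\la\cC(1^n)]=G/2^{k}$ with $G\in\mathbf{GapP}$ is established, the exact recovery of $G$ by binary search against the $\mathbf{PP}$ oracle is routine.
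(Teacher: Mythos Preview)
Your proposal is correct and is exactly the standard Fortnow--Rogers argument; the paper itself gives no proof of this lemma but simply defers to \cite{FR99}, so there is nothing to compare against beyond noting that you have accurately reconstructed the cited proof. Your caveat about fixing the gate set so that all amplitudes share a common dyadic denominator (e.g.\ Hadamard plus Toffoli, or tracking the four $\Z$-coordinates of $\Z[e^{i\pi/4}]$ when using $\{H,T,CNOT\}$) is the only genuinely delicate point, and you have identified it correctly.
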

We refer the proof to \cite{FR99}.
Now, we prove \cref{lem:Algo_for_SQMLH}.
\begin{proof}[Proof of \cref{lem:Algo_for_SQMLH}]
    Let $\cL\in\mathbf{PP}$ and $\cB$ be a deterministic polynomial time algorithm given in \cref{lem:Extrapolation_PP}.
    We describe our notations and our PPT algorithm $\cA^\cL$.
    \begin{description}
        \item[Notations:]$ $
        \begin{itemize}
            \item Let $\cC$ be a QPT algorithm that takes as input $1^n$ and $x\in\cX_n$, and outputs $z\in\cZ_n$.
            Let $\ell$ be the length of $x\in\cX_n$.
            \item Without loss of generality, we can assume that the QPT algorithm $\cC(1^n,x)$ works as follows:
            \begin{enumerate}
                \item It prepares a quantum state
                \begin{align}
                    U_{\cC(x)}\ket{0}_{YV}\seteq \ket{\cC(x)}_{YV}=\sum_{z\in\cZ}\sqrt{\Pr[z\la\cC(x)]}\ket{z}_Y\ket{\phi_{x,z}}_V.
                \end{align}
                \item It measures the $Y$ register, and outputs the measurement outcome.
            \end{enumerate}
            \item Let $T\geq \frac{ \delta+\log(\abs{\cX_n})}{\log(1+1/\epsilon)}$ so that 
            \begin{align}
                \abs{\cX_n}\left(\frac{1}{1+1/\epsilon} \right)^T\leq 2^{-\delta}.
            \end{align}
            \item Let $\widehat{\cQ}[z]$ be a quantum algorithm with post-selection working as follows: 
            \begin{enumerate}
                \item It prepares 
                \begin{align}
                    U_\cQ\ket{0}_{WY_1V_1\cdots Y_TV_T}&\seteq \frac{1}{\sqrt{\abs{\cX_n}}}\sum_{x\in\cX_n}\ket{x}_W\bigotimes_{i\in[T]}\left( \ket{\cC(x)}_{Y_iV_i}\right)\\
                    &=
                    \frac{1}{\sqrt{\abs{\cX_n}}}\sum_{x\in\cX_n}\ket{x}_W\bigotimes_{i\in[T]} \left(\sum_{z\in\cZ_n}\sqrt{\Pr[z\la\cC(x)]}\ket{z}_{Y_i}\ket{\phi_{x,z}}_{V_i}\right),
                \end{align}
                measures the $Y_1\cdots Y_T$ in the computational basis, and obtains $\{y_i\}_{i\in[T]}$.
                Repeat this step until $y_i=z$ for all $i\in[T]$.
                Let $\ket{\psi[z]}_{WY_1V_1\cdots Y_TV_T}$ be the post-selected quantum state.
                Note that we have
                \begin{align}
                    \ket{\psi[z]}_{WY_1V_1\cdots Y_TV_T}=\frac{1}{\sqrt{\sum_{x\in\cX_n}\Pr[z\la\cC(x)]^T}}\sum_{h\in\cX_n}
                    \sqrt{\Pr[z\la\cC(h)]^T}\ket{h}_W\bigotimes_{i\in[T]}\left(\ket{z}_{Y_i}\ket{\phi_{h,z}}_{V_i}\right).
                \end{align}
                \item Measure the $W$ register and output the measurement outcome $h$.
            \end{enumerate}
        \end{itemize}
        Our oracle aided PPT algorithm $\cA^{\cL}$ simulates $\widehat{\cQ}[z]$ by working as follows.
        \item[Description of $\cA^{\cL}$:] $ $
        \begin{itemize}
            \item As an input of $\cA^{\cL}$, it receives $1^n$, a QPT algorithm $\cC$, and a classical string $z\in\cZ_n$.
            \item For $i\in[\ell]$, $\cA^{\cL}$ sequentially works as follows:
            \begin{itemize}
            \item For each $b\in\bit$, it computes
            \begin{align}
               \Pr[b | x_{i-1},\cdots,x_1,z]= \frac{\Pr[b,x_{i-1},\cdots, x_1,z]}{\Pr[x_{i-1},\cdots, x_1,z ]}.
            \end{align}
            Here, 
            \begin{align}
                \Pr[x_{i-1},\cdots,x_1,z]=\Tr(\left(\ket{x_{i-1},\cdots,x_1}\bra{x_{i-1},\cdots,x_1}_{W[i-1]}\bigotimes_{i\in[T]} \ket{z}\bra{z}_{Y_i}\right) U_\cQ\ket{0}_{WY_1V_1\cdots Y_TV_T} ),
            \end{align}
            where $W[i-1]$ is the register in the first $(i-1)$-bits of $W$.
            For this procedure, we use the algorithm $\cB$ and $\cL\in\mathbf{PP}$ given in \cref{lem:Extrapolation_PP}.
            \item Set $x_i=b$ with probability 
            \begin{align}
                \Pr[b | x_{i-1},\cdots,x_1,z]= \frac{\Pr[b,x_{i-1},\cdots, x_1,z]}{\Pr[x_{i-1},\cdots, x_1,z ]}.
            \end{align}
            \end{itemize}
            \item Output $h\seteq x_\ell,\cdots,x_1$.
        \end{itemize}
    \end{description}
From the construction of $\cA^{\cL}$, 
we have
\begin{align}
    \Pr[x\la\cA^{\cL}(1^n,\cC,z)]=\frac{\Pr[x_\ell,\cdots,x_1,z]}{\Pr[z]}
\end{align}
for all $x\in\cX_n$.
From the definition, $\frac{\Pr[x_\ell,\cdots,x_1,z]}{\Pr[z]}$ is exactly equal to the probability that $\widehat{Q}[z]$ outputs $x$.
This means that the output distribution of $\cA^{\cL}(1^n,\cC,z)$ is equivalent to that of $\widehat{Q}[z]$.
Therefore, it is sufficient to show that
\begin{align}
\Pr_{h\la \widehat{\cQ}[z](1^n)}\left[ \frac{\max_{x\in\cX_n}\Pr[z\la\cC(x)] }{1+1/\epsilon}\leq \Pr[z\la\cC(h)] \right]   \geq 1-2^{-\delta}.
\end{align}
Let $\mathsf{Bad}_z$ be a subset of $ \cX_n$ such that
\begin{align}
\mathsf{Bad}_z\seteq \left\{h\in\cX_n:\Pr[z\la\cC(h)]<\frac{\max_{x\in\cX_n}\Pr[z\la\cC(x)] }{1+1/\epsilon}\right \}.
\end{align}
From the construction of $\widehat{\cQ}[z]$, we have
\begin{align}
    \Pr[h\la \widehat{\cQ}[z]]= \frac{\Pr[z\la \cC(h)]^T}{ \sum_{x\in\cX_n} \Pr[z\la\cC(x)]^T }
\end{align}
for all $h\in\cX_n$.
Then, for all $h\in\mathsf{Bad}_z$, we have
\begin{align}
    \Pr[h\la \widehat{\cQ}[z]]= \frac{\Pr[z\la \cC(h)]^T}{ \sum_{x\in\cX_n} \Pr[z\la\cC(x)]^T }\leq 
    \frac{\Pr[z\la \cC(h)]^T}{ \max_{x\in\cX_n}\Pr[z\la\cC(x)]^T }<\left(\frac{1}{1+1/\epsilon}\right)^T.
\end{align}
Therefore, we have
\begin{align}
    \sum_{h\in\mathsf{Bad}_z} \Pr[h\la \widehat{\cQ}[z]]<\sum_{h\in\mathsf{Bad}_z}\left(\frac{1}{1+1/\epsilon}\right)^T=\abs{\mathsf{Bad}_z}\left(\frac{1}{1+1/\epsilon}\right)^T\leq \abs{\cX_n}\left(\frac{1}{1+1/\epsilon}\right)^T\leq 2^{-\delta},
\end{align}
which completes the proof.

\end{proof}

\section{Hardness of Quantum State Learning}\label{sec:hardness}
\subsection{Equivalence between Average-Case Hardness of Learning Pure States and OWSGs}
\begin{theorem}\label{thm:owsg_a_h_learn}
    The existence of OWSGs with pure state outputs is equivalent to that of a polynomial-size quantum circuit with pure state outputs, which is hard on average in the sense of \cref{def:AH_QLearn}.
\end{theorem}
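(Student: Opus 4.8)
The plan is to prove both implications by directly identifying the OWSG key $k$ with the circuit input $x$, and exploiting the elementary pure-state identity $\mathsf{TD}(\ket{\psi},\ket{\phi})=\sqrt{1-\abs{\langle\psi|\phi\rangle}^2}$. The point is that $\abs{\langle\psi_{k^*}|\psi_k\rangle}^2$ is exactly the acceptance probability of the OWSG verification measurement $\{\ket{\psi_{k^*}}\bra{\psi_{k^*}},I-\ket{\psi_{k^*}}\bra{\psi_{k^*}}\}$ applied to $\ket{\psi_k}$, so ``small trace distance'' and ``the OWSG breaker wins'' are two readings of the same quantity.

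For the direction OWSG $\Rightarrow$ hard-on-average $\cC$, I would set $\cC(1^n,x)\seteq\StateGen(1^n,x)$ and let $\cS_n$ be the output distribution of $\keygen(1^n)$ (padded to fixed length so that $\cS_n$ is efficiently sampleable over $\cX_n=\bit^{\poly(n)}$). Fix $\epsilon(n)=2$. Given any learner $\cA$ and any $t$, I build an OWSG adversary that receives $\ket{\psi_k}^{\otimes t}$, runs $\cA(\cC,\ket{\psi_k}^{\otimes t},1^n)$ to get $y$, and outputs $k^*\seteq y$. Its acceptance probability is $\abs{\langle\psi_k|\psi_y\rangle}^2=1-\mathsf{TD}(\psi_k,\psi_y)^2$, which is at least $3/4$ whenever $\mathsf{TD}\le 1/2$; hence its advantage is at least $\tfrac34\Pr[\mathsf{TD}\le 1/2]$. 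OWSG security forces this to be negligible, so $\Pr[\mathsf{TD}\le 1/\epsilon]$ is negligible and $\cC$ is hard on average (with $\epsilon=2$ and, say, any polynomial $\delta$).

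For the reverse direction, let $\cC$ be hard on average with parameters $(\cS_n,\epsilon_0,\delta_0)$, and define a candidate generator by $\keygen(1^n)\seteq\cS_n$ (output the sample as the key) and $\StateGen(1^n,k)\seteq\cC(k,1^n)$, whose output is pure because $\cC$'s is. Reading any OWSG adversary as a learner that returns its guess $k^*$ as $y$, its winning probability is $\mathbb{E}[\abs{\langle\psi_x|\psi_{k^*}\rangle}^2]=1-\mathbb{E}[\mathsf{TD}(\psi_x,\psi_{k^*})^2]$. The hardness bound $\Pr[\mathsf{TD}\le 1/\epsilon_0]\le 1/\delta_0$ gives $\mathbb{E}[\mathsf{TD}^2]\ge \tfrac{1}{\epsilon_0^2}(1-\tfrac1{\delta_0})\ge \tfrac{1}{2\epsilon_0^2}$ for all large $n$, so the winning probability is at most $1-\tfrac{1}{2\epsilon_0^2}$. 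This already yields a \emph{weak} OWSG with pure outputs, whose inversion probability is bounded away from $1$ by an inverse polynomial.

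The hard part is precisely this last gap: learning hardness only delivers an inverse-polynomial bound on the adversary's success, while \cref{def:owsg_pure} demands a negligible bound. I would close it by invoking weak-to-strong hardness amplification for OWSGs with pure state outputs, instantiated by parallel repetition: use key $(k_1,\dots,k_m)$ and the product state $\bigotimes_{i\in[m]}\ket{\psi_{k_i}}$, whose verification probability factorizes as $\prod_{i\in[m]}\abs{\langle\psi_{k_i}|\psi_{k_i^*}\rangle}^2$, so that a direct-product theorem drives the winning probability to $(1-\tfrac{1}{2\epsilon_0^2})^{\Theta(m)}=\negl(n)$. The delicate point is that the adversary sees the $m$ registers jointly and may correlate its guesses, so the amplification needs a genuine quantum direct-product/parallel-repetition statement rather than naive independence; what makes it go through is that verification here is an efficient projective measurement and the product state stays pure, which also guarantees the amplified scheme still has pure state outputs, completing the equivalence.
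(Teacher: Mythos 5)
Your proposal is correct and, at the level of constructions, coincides with the paper's proof: both directions identify keys with circuit inputs ($\cC$ runs $\StateGen$ and $\cS_n$ is the distribution of $\keygen(1^n)$, and conversely), both use that the verification probability in \cref{def:owsg_pure} is exactly $\abs{\bra{\psi_{k^*}}\ket{\psi_k}}^2$, and both outsource the final strengthening to the OWSG amplification of \cite{Eprint:MY22}. The one genuine divergence is in the middle of the ``hardness $\Rightarrow$ OWSG'' direction. The paper restates \cref{def:AH_QLearn} with a fidelity threshold (the learner attains fidelity at least $1/\epsilon(n)$ only with probability $1/\delta(n)$), so on the complementary event each pair's contribution to the inverter's success is capped by the \emph{small} quantity $1/\epsilon(n)$, and the reduction directly bounds any inverter's success by $1/\delta(n)+1/\epsilon(n)-1/(\delta(n)\epsilon(n))$; the appeal to \cite{Eprint:MY22} then only has to bridge inverse-polynomial success to negligible. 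You instead read \cref{def:AH_QLearn} literally in trace distance, where the bad event $\mathsf{TD}>1/\epsilon_0(n)$ caps fidelity only at $1-1/\epsilon_0(n)^2$, so you obtain merely a weak OWSG with success at most $1-1/(2\epsilon_0(n)^2)$ and must invoke full weak-to-strong amplification, including a direct-product statement that is sound against adversaries correlating their guesses across the $m$ registers --- which you correctly flag as the crux rather than prove, exactly as the paper does by citation, and which \cite{Eprint:MY22} indeed supplies. Arguably your accounting is the faithful one: for pure states, fidelity at least $1/\epsilon$ corresponds to $\mathsf{TD}\leq\sqrt{1-1/\epsilon}$, a far laxer success criterion than the $\mathsf{TD}\leq 1/\epsilon(n)$ of \cref{def:AH_QLearn}, so the paper's stronger intermediate bound rests on a silently relaxed hardness hypothesis, and your weaker bound combined with the stronger amplification theorem is the parameter-clean version of the same argument; similarly, in the first direction, your fixed choice $\epsilon=2$ with a direct negligibility conclusion (success $\leq 1/\delta(n)$ for every polynomial $\delta$) matches the literal negation of \cref{def:AH_QLearn} more closely than the paper's contradiction hypothesis, which posits a learner succeeding with probability greater than $1-1/\delta(n)$ rather than merely greater than $1/\delta(n)$.
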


\begin{remark}
This result works only for "proper" quantum state learning, and it is unclear how to show the hardness of improper quantum state learning from OWSGs as far as we understand.
On the other hand, we can show the hardness of improper quantum state learning from the existence of pseudorandom state generators.
\end{remark}

\cref{thm:owsg_a_h_learn} relatively straightforwardly follows from the definition.
We describe the proof for clarity.
\begin{proof}[Proof of \cref{thm:owsg_a_h_learn}]

First, we show that if there exist OWSGs with pure state outputs, then there exists a quantum polynomial-time algorithm $\cC$ with pure state outputs, which is hard on average.

\paragraph{OWSGs $\ra$ Average-Case Hardness of Quantum State Learning.}
It is sufficient to construct a quantum polynomial-time algorithm $\cC$ such that there exists a sampleable distribution $\cS_n$ over $\cX_n$ and polynomials $\epsilon$ and $\delta$ such that for all polynomials $t$,
\begin{align}
        \Pr[
        \begin{array}{ll}
        F(\ket{\psi_{x}},\ket{\psi_{x^*}})\geq 1-1/\epsilon(n)
        \end{array}
        :
    \begin{array}{ll}
    x\la \cS_n \\
       \ket{\psi_{x}}\la \cC(1^n,x)\\
        x^*\la \cA(\ket{\psi_{x}}^{\otimes t(n)},1^n)\\
        \ket{\psi_{x^*}}\la \cC(1^n,x^*)
    \end{array}
        ]\leq 1-1/\delta(n)
    \end{align}
    for all sufficiently large $n\in\N$.

We define $\cC$ and $\cS$ as follows:
\begin{description}
    \item[The description of $\cS_{n}$:]$ $
    \begin{itemize}
        \item $\cS_{n}$ is the same as the distribution of $\keygen(1^n)$.
    \end{itemize}
    \item[The description of $\cC$:]$ $
    \begin{itemize}
        \item $\cC$ runs $\StateGen(1^n, x)$, and outputs its output.
    \end{itemize}
\end{description}
For contradiction, assume that for all polynomials $\epsilon$ and $\delta$ there exists a polynomial $t$, and a QPT learner $\cB$ such that
\begin{align}
    \Pr[
    \begin{array}{ll}
    F(\ket{\psi_{x}},\ket{\psi_{x^*}})\geq 1-1/\epsilon(n)
     \end{array}
        :
    \begin{array}{ll}
       x\la \cS_n  \\
       \ket{\psi_{x}}\la\cC(1^n,x) \\
        x^*\la \cB(\ket{\psi_{x}}^{\otimes t(n)},1^n)\\
        \ket{\psi_{x^*}}\la\cC(1^n,x^*)
    \end{array}
        ]> 1-1/\delta(n)
\end{align}
for infinitely many $n\in\N$.
Now, by using $\cB$, we construct a QPT adversary $\cA$ that breaks the security of OWSGs $\Sigma$.
We describe the $\cA$ in the following.
\begin{enumerate}
    \item $\cA$ receives sufficiently many copies of $\ket{\psi_x}$, where $x\la\keygen(1^n)$ and $\ket{\psi_x}\la\StateGen(1^n,x)$.
    \item $\cA$ passes them to $\cB$, and receives $x^*$.
    \item $\cA$ sends $x^*$ to the challenger.
    \item The challenger measures $\ket{\psi_x}$ with $\{\ket{\psi_{x^*}}\bra{\psi_{x^*}},I-\ket{\psi_{x^*}}\bra{\psi_{x^*}}\}$.
    \item The challenger outputs $1$ if the measurement result is $\ket{\psi_{x^*}}\bra{\psi_{x^*}}$ is obtained, and $0$ otherwise.
\end{enumerate}
Let us define
\begin{align}
    \mathsf{Good}\seteq \left\{(x,x^*):\abs{\bra{\psi_{x}}\ket{\psi_{x^*}}}^2\geq 1-1/\epsilon(n)\right\}.
\end{align}
Now, we compute the probability that the challenger outputs $1$.
\begin{align}            
&\Pr[1\la\mbox{Challenger}:x\la\keygen(1^n),\ket{\psi_x}\la\StateGen(1^\secp,x), x^*\la\cA(\ket{\psi_x}^{\otimes t(n)})]\\
&=\Pr[1\la\mbox{Challenger}
        :
    \begin{array}{ll}
       x\la \cS_n  \\
       \ket{\psi_x}\la \cC(1^n,x)\\
        x^*\la \cB(\ket{\psi_{x}}^{\otimes t(\secp)},1^n)\\
        \ket{\psi_{x^*}}\la \cC(1^n,x^*)
    \end{array}
        ]\\
&=\sum_{x,x^*}\Pr[x\la\cS_n]\Pr[x^*\la \cB(\ket{\psi_{x}}^{\otimes t(n)})]\abs{\bra{\psi_{x^*}}\ket{\psi_{x}}}^2\\
&\geq\sum_{x,x^*\in \mathsf{Good}}\Pr[x\la\cS_n]\Pr[x^*\la\cB(\ket{\psi_{x}}^{\otimes t(n)})]\abs{\bra{\psi_{x}}\ket{\psi_{x^*}}}^2\\
&\geq \sum_{x,x^*\in \mathsf{Good}}\Pr[x\la\cS_n]\Pr[x^*\la\cB(\ket{\psi_{x}}^{\otimes t(n)})](1-1/\epsilon(n))\\
&\geq (1-1/\delta(n))(1-1/\epsilon(n))
\end{align}
for infinitely many $n\in\N$.
This contradicts that $\Sigma$ is OWSG with pure state outputs, which completes the proof.
\paragraph{Average-case hardness of quantum state learning $\ra$ OWSGs }
Let $\cC$ be a quantum polynomial-time algorithm that takes $1^n$ and $x\in\cX_n=\bit^{\poly(n)}$ and outputs $\ket{\psi_x}$, and let $\cS_n$ be a sampleable distribution over $\cX_n$ and let $\epsilon$ and $\delta$ be polynomials such that for all polynomials $t$ and all QPT algorithms $\cA$, 
\begin{align}
    \Pr[
    \begin{array}{ll}
    F(\ket{\phi_{x^*}},\ket{\psi_{x}})\geq 1/\epsilon(n)
    \end{array}
        :
    \begin{array}{ll}
       x\la \cS_n  \\
       \ket{\psi_{x}}\la \cC(1^n,x)\\
        x^*\la \cA(\ket{\psi_{x}}^{\otimes t(n)},1^n)\\
        \ket{\psi_{x^*}}\la\cC(1^n,x^*)
    \end{array}
        ]\leq 1/\delta(n)
\end{align}
for infinitely many $n\in\N$.
Let $\Samp$ be an algorithm that takes $1^n$ and perfectly approximates $\cS_n$.

We describe the construction of OWSGs $(\keygen,\StateGen)$:
\begin{description}
    \item[$\keygen(1^n)$:] $ $
    \begin{itemize}
        \item $\keygen(1^n)$ runs $\Samp(1^n)$ and obtains $x\in\cX_n$.
    \end{itemize}
    \item[$\StateGen(1^n,x)$:]$ $
    \begin{itemize}
        \item $\StateGen(1^n,x)$ outputs the output of $\cC(1^n,x)$.
    \end{itemize}
\end{description}
For contradiction, assume that there exist polynomials $t$, and $p$ and a QPT algorithm $\cB$ such that
\begin{align}
   \frac{1}{\delta(n)}+\frac{1}{\epsilon(n)}-\frac{1}{\delta(n)\cdot\epsilon(n)}= 1/p(n)<\sum_{x,x^*}\Pr[x\la\cS_n]\Pr[x^*\la \cB(\ket{\psi_{x}}^{\otimes t(n)})]\abs{\bra{\psi_{x^*}}\ket{\psi_{x}}}^2
\end{align}
for infinitely many $n\in\N$.
For some polynomial, let us define
\begin{align}
    &\mathsf{Good}\seteq\left\{(x,x^*):\frac{1}{\epsilon(n)}<\abs{\bra{\psi_x}\ket{\psi_{x^*}}}^2\leq 1\right\}\\
    &\mathsf{Bad}\seteq\left\{(x,x^*):\abs{\bra{\psi_x}\ket{\psi_{x^*}}}^2\leq \frac{1}{\epsilon(n)}\right\}
\end{align}
Now, we have
\begin{align}
    1/p(n)\leq &\sum_{x,x^*}\Pr[x\la\cS_n]\Pr[x^*\la \cB(\ket{\psi_{x}}^{\otimes t(n)})]\abs{\bra{\psi_{x^*}}\ket{\psi_{x}}}^2\\
    &=\sum_{x,x^*\in\mathsf{Good}}\Pr[x\la\cS_n]\Pr[x^*\la \cB(\ket{\psi_{x}}^{\otimes t(n)})]\abs{\bra{\psi_{x^*}}\ket{\psi_{x}}}^2
    +\sum_{x,x^*\in\mathsf{Bad}}\Pr[x\la\cS_n]\Pr[x^*\la \cB(\ket{\psi_{x}}^{\otimes t(n)})]\abs{\bra{\psi_{x^*}}\ket{\psi_{x}}}^2\\
    &\leq \sum_{x,x^*\in\mathsf{Good}}\Pr[x\la\cS_n]\Pr[x^*\la\cB(\ket{\psi_x}^{\otimes t(n)})]+\sum_{x,x^*\in\mathsf{Bad}}\Pr[x\la\cS_n]\Pr[x^*\la\cB(\ket{\psi_x}^{\otimes t(n)})]\cdot\frac{1}{\epsilon(n)}\\
    &=\frac{1}{\epsilon(n)}+\left(1-\frac{1}{\epsilon(n)}\right)\sum_{x,x^*\in\mathsf{Good}}\Pr[x\la\cS_n]\Pr[x^*\la\cB(\ket{\psi_x}^{\otimes t(n)})]
\end{align}
This implies that
\begin{align}
    \frac{\epsilon(n)-p(n)}{p(n)\cdot(\epsilon(n)-1)}\leq \sum_{x,x^*\in\mathsf{Good}}\Pr[x\la\cS_n]\Pr[x^*\la\cB(\ket{\psi_x}^{\otimes t(n)})].
\end{align}
This means that
\begin{align}
    \Pr[
    \begin{array}{ll}
    F(\ket{\psi_{x^*}},\ket{\psi_{x}})\geq \frac{1}{\epsilon(n)}
    \end{array}
        :
    \begin{array}{ll}
       x\la \cS_n  \\
       \ket{\psi_{x}}\la \cC(1^n,x)\\
        x^*\la \cB(\ket{\psi_{x}}^{\otimes t(n)},1^n)\\
        \ket{\psi_{x^*}}\la\cC(1^n,x^*)
    \end{array}
        ]> \frac{\epsilon(n)-p(n)}{p(n)\cdot(\epsilon(n)-1)}=1/\delta(n)
\end{align}
for infinitely many $n\in\N$, which is a contradiction.
Therefore, $(\keygen,\StateGen)$ satisfies
\begin{align}
    \sum_{k,k^*}\Pr[k\la\keygen^*(1^n)]\Pr[k^*\la\cA(\ket{\psi_k}^{\otimes t(n)})]\abs{\bra{\psi_k}\ket{\psi_{k^*}}}^2\leq 1/p(n)
\end{align}
for all polynomials $t$ and infinitely many $n\in\N$.
As shown in \cite{Eprint:MY22}, from $(\keygen,\StateGen)$, we can construct $(\keygen^*,\StateGen^*)$ such that for all polynomials $t$,
\begin{align}
    \sum_{k,k^*}\Pr[k\la\keygen^*(1^n)]\Pr[k^*\la\cA(\ket{\psi_k}^{\otimes t(n)})]\abs{\bra{\psi_k}\ket{\psi_{k^*}}}^2\leq \negl(n),
\end{align}
which completes the proof.
\end{proof}

\subsection{Average-Case Hardness of Learning Mixed State from EFI}
\begin{theorem}\label{thm:Hardness_from_EFI}
    If there exists EFI, then there exists a quantum polynomial-time algorithm with mixed states, which is hard on average in the sense of \cref{def:AH_QLearn}.
\end{theorem}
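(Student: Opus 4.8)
The plan is to exploit the equivalence, due to \cite{Eprint:MY22}, between the existence of EFI and that of secretly-verifiable statistically-invertible one-way state generators (SV-SI-OWSGs), and then to observe that statistical invertibility collapses the learning success condition (small trace distance) into exact key recovery (inversion). First I would invoke \cite{Eprint:MY22} to obtain, from the assumed EFI, an SV-SI-OWSG $\Sigma\seteq(\keygen,\StateGen)$ whose generation algorithm outputs (in general mixed) states; this already supplies the mixed-state QPT algorithm demanded by the statement.

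Next I would define the candidate average-case-hard instance. Set $\cC(x,1^n)\seteq\StateGen(1^n,x)$ and let the sampling distribution $\cS_n$ be the output distribution of $\keygen(1^n)$, which is efficiently sampleable and supported on $\cX_n=\bit^{\poly(n)}$ (the key space). I would then fix the two polynomials required by \cref{def:AH_QLearn} to be constants, say $\epsilon(n)\seteq 2$ and $\delta(n)\seteq 2$, so that the learner's success region is $\mathsf{TD}(\psi_x,\psi_y)\leq 1/2$.

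The crux is the reduction from learning to inversion. By statistical invertibility, for every distinct pair $x\neq y$ in $\cX_n$ we have $\mathsf{TD}(\psi_x,\psi_y)\geq 1-\negl(n)$, which exceeds $1/2$ for all sufficiently large $n$. Hence the success event $\mathsf{TD}(\psi_x,\psi_y)\leq 1/\epsilon(n)=1/2$ can occur only when $y=x$, so that for any polynomial $t$ and any QPT learner $\cA$,
\begin{align}
\Pr[\mathsf{TD}(\psi_x,\psi_y)\leq 1/2]\leq \Pr[y=x : x\la\cS_n,\ y\la\cA(\cC,\psi_x^{\otimes t(n)},1^n)].
\end{align}
The right-hand side is precisely the probability that $\cA$, reinterpreted as an inverter for $\Sigma$ (it knows the public algorithm $\cC=\StateGen$ and outputs a candidate key $y$), recovers $k=x$ from $t(n)$ copies of $\psi_k$. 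Computational non-invertibility of $\Sigma$ bounds this by $\negl(n)$, so the learner's success probability is at most $\negl(n)\leq 1/\delta(n)$ for all sufficiently large $n$. Since this holds for every $t$ and every $\cA$, the chosen $\epsilon,\delta$ witness that $\cC$ is hard on average in the sense of \cref{def:AH_QLearn}.

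I expect the only delicate point to be the treatment of learner outputs $y$ lying outside the support of $\cS_n$: the argument needs statistical invertibility to hold for every distinct pair in $\cX_n$, not merely for honestly generated keys, so that an off-support $y\neq x$ still yields a far state. Assuming the definition provides this (as stated, it quantifies over any $k\neq k^*$), the reduction is immediate and requires no further quantitative estimates; otherwise one simply restricts or pads the learner's output to the support, which changes nothing substantive.
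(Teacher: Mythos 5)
Your proposal is correct and follows essentially the same route as the paper: both invoke the EFI $\leftrightarrow$ SV-SI-OWSG equivalence of \cite{Eprint:MY22}, set $\cC=\StateGen$ and $\cS_n=\keygen(1^n)$, and use statistical invertibility ($\mathsf{TD}(\psi_x,\psi_y)\geq 1-\negl(n)$ for all $x\neq y$, which indeed covers off-support outputs as you note) to collapse the learning success event into exact key recovery, which computational non-invertibility rules out. The only difference is presentational: you argue directly with fixed $\epsilon=\delta=2$, whereas the paper phrases the same reduction as a proof by contradiction quantifying over all $\epsilon,\delta$.
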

\begin{remark}
This result works only for "proper" quantum state learning, and it is unclear how to show the hardness of improperly learning mixed states from EFI as far as we understand.
\end{remark}
\cref{thm:Hardness_from_EFI} straightforwardly follows from the definition.
We describe the proof for clarity.

\begin{proof}[Proof of \cref{thm:Hardness_from_EFI}]
    Let $(\keygen,\StateGen)$ be a SV-SI-OWSGs such that 
    $\mathsf{TD}(\psi_{n,x},\psi_{n,x^*})\geq 1-\negl(n)$ for any $x\neq x^*$, where $\psi_{n,x}\la\StateGen(1^n,x)$.
    It is known that EFI is equivalent to SV-SI-OWSGs.
    Therefore, it is sufficient from SV-SI-OWSGs $(\keygen,\StateGen)$ to construct a quantum polynomial-time algorithm $\cC$ such that there exists a sampleable distribution $\cS_n$ over $\cX_n$ and polynomials $\epsilon$ and $\delta$ such that for all polynomials $t$ and all QPT algorithms $\cA$, 
    \begin{align}
        \Pr[
        \begin{array}{ll}
        \mathsf{TD}(\psi_x,\psi_y)\leq 1/\epsilon(n)
        \end{array}
        :
    \begin{array}{ll}
    x\la \cS_n \\
    \psi_x\la \cC(x,1^n)\\
    y\la \cA(\cC,\psi_{x}^{\otimes t(n)},1^n)\\
    \psi_y\la \cC(y,1^n)
    \end{array}
        ]\leq 1/\delta(n)
    \end{align}
    for all sufficiently large $n\in\N$.
    
    Let us describe $\cC$ and $\cS_n$.
    \begin{description}
        \item[The description of $\cS_{n}$:]$ $
        \begin{itemize}
            \item $\cS_n$ outputs the output of $\keygen(1^n)$.
        \end{itemize}
        \item[The description of $\cC$:]$ $
        \begin{itemize}
            \item $\cC(1^n,x)$ outputs the output of $\StateGen(x,1^n)$.
        \end{itemize}
    \end{description}
For contradiction, assume that for all polynomials $\epsilon$ and $\delta$, there exists a QPT learner $\cB$ and a polynomial $t$ such that
\begin{align}
    \Pr[
    \begin{array}{ll}
         &  \mathsf{TD}(\psi_{y},\psi_{x})\leq 1/\epsilon(n)\\
         & h_n\in\cS_n
    \end{array}
    :
    \begin{array}{ll}
         &  x\la\cS_n \\
         & \psi_{x}\la \cC(x,1^n)\\
         & y\la\cB(\cC,\psi_{x}^{\otimes t(n)},1^n)\\
         & \psi_{y}\la \cC(y,1^n)
    \end{array}
    ]> 1/\delta(n)
\end{align}
for infinitely many $n\in\N$.
Now, by using $\cB$, we construct a QPT adversary $\cA$ that breaks the security of the SV-SI-OWSG scheme $\Sigma$ as follows.
\begin{enumerate}
    \item $\cA$ receives sufficiently many copies of $\psi_x$, where $x\la\keygen(1^\secp)$ and $\psi_x\la\StateGen(x)$.
    \item $\cA$ passes them to $\cB$, and receives $x^*$.
    \item $\cA$ sends $x^*$ to the challenger.
\end{enumerate}
Let us define a family $\mathsf{Good}$
\begin{align}
    \mathsf{Good}\seteq \{(x,x^*):\mathsf{TD}(\psi_x,\psi_{x^*})\leq 1/\epsilon(n)\}.
\end{align}
From the definition of $\cB$, we have
\begin{align}
    1/\delta(n)<&\sum_{(x,x^*)\in\mathsf{Good}}\Pr[x\la\keygen(1^n)]\Pr[x^*\la\cB(\psi_x^{\otimes t(n)})]\\
    &=\sum_{x}\Pr[x\la\keygen(1^n)]\Pr[x\la\cB(\psi_x^{\otimes t(n)})]
\end{align}
for infinitely many $n\in\N$.
Here, in the second equation, we have used that for all $x\neq x^*$
\begin{align}
    \mathsf{TD}(\psi_{n,x},\psi_{n,x^*})\geq 1-\negl(n).
\end{align}
Now, we compute the winning probability of the adversary $\cA$ as follows.
\begin{align}
    &\Pr[x\la\cA(\psi_x^{\otimes t(n)}): x\la\keygen(1^n),\psi_x\la\StateGen(1^n,x)]\\
    &=\sum_{x}\Pr[x\la\keygen(1^n)]\Pr[x\la\cA(\psi_x^{\otimes t(n)})]\\
    &=\sum_{x}\Pr[x\la\keygen(1^n)]\Pr[x\la\cB(\psi_x^{\otimes t(n)})]>1/\delta(n)
\end{align}
for infinitely many $n\in\N$.
This contradicts that $\Sigma$ satisfies the security of SV-SI-OWSGs, which completes the proof.
\end{proof}

\if0
\section{OWSGs from Average-Case Hardness of Quantum State Learning}
\taiga{This result may be useless given that OWSGs can be obtained from PAC learning.}
Our main theorem is the construction of OWSGs with pure state output from the average-case hardness of pure state learning.
\begin{theorem}[Pure OWSGs from Average-Case Hardness of Pure State Learning]\label{thm:pure_owsg_from_state_learn}
    If there exists a family of polynomial-size quantum circuits $\cS=\{\cS_n\}_{n\in\N}$, where for every $S\in\cS_n$, $S$ is $\poly(n)$ size and takes no input and outputs $\poly(n)$-qubit pure state, a family of efficiently sampleable distribution $\cC_n$ over $\cS_n$, and inverse polynomial $\epsilon$ and $\delta$ such that $\cS$ is $(\epsilon,\delta)$-hard on average under $\cC$, then there exists OWSG with pure state outputs.
\end{theorem}
By taking a contraposition, \cref{thm:pure_owsg_from_state_learn} implies that the non-existence of OWSGs with pure state outputs implies the average-case pure state learning as follows.
\begin{corollary}[Average-Case Pure State Learning from Non-Existence of Pure OWSGs]\label{cor:state_learn_from_pure_owsg}
    If there do not exist OWSGs with pure state outputs, then for all family of polynomial-size quantum circuits $\cS=\{\cS_n\}_{n\in\N}$, where for every $S\in\cS_n$, $S$ is $\poly(n)$ size and takes no input and outputs $\poly(n)$-qubit pure states, and all efficiently sampleable distribution $\cC_n$ over $\cS_n$ and for all inverse polynomial $\epsilon$ and $\delta$, $\cS$ is $(\epsilon,\delta)$-learnable on average under $\cC$.
\end{corollary}
We prove \cref{thm:pure_owsg_from_state_learn}.
\begin{proof}[Proof of \cref{thm:pure_owsg_from_state_learn}]
We introduce notations.
\paragraph{Notations.}
Let $\cS$ be a family of polynomial-size quantum circuits $\cS=\{\cS_n\}_{n\in\N}$, where for every $S\in\cS_n$, $S$ is $\poly(n)$ size and takes no input and outputs $\poly(n)$-qubit pure state, and $\cC_n$ be an efficiently sampleable distribution over $\cS_n$ such that for all QPT learner $\cB$
\begin{align}
        \Pr[F(\ket{\psi},\phi)\geq 1-\epsilon(n)
        :
    \begin{array}{ll}
       S\la_{\cC_n} \cS_n  \\
       \ket{\psi}\la S\\
        C\la \cB(\ket{\psi}^{\otimes t(n)},1^n)\\
        \phi\la C
    \end{array}
        ]\leq 1-\delta(n)
\end{align}
for every sufficiently large $n\in\N$.

\paragraph{Construction.}
Let $p$ be an inverse polynomial such that
\begin{align}
    p(\secp)\leq \epsilon(\secp)\cdot\delta(\secp)
\end{align}
for all $\secp\in\N$.
We construct OWSG with pure state outputs $\Sigma=(\keygen,\StateGen,\Vrfy)$ that satisfies $p$-security.
From the security amplification of OWSGs~(\cref{thm:amp_owsg}), this implies that there exists strong secure OWSGs.
     \begin{description}
        \item[$\keygen(1^\secp)$:]$ $
        \begin{itemize}
            \item Sample $S\la_{\cC_\secp}\cS_\secp$.
            \item Output $S$.
        \end{itemize}
        \item[$\StateGen(1^\secp,k)$:]$ $
        \begin{itemize}
            \item Parse $k=S$
            \item It runs $S$, and obtains $\ket{\psi_S}$.
            \item Output $\ket{\psi_S}$.
        \end{itemize}
        \item[$\Vrfy(1^\secp,k,\psi)$:]$ $
        \begin{itemize}
            \item Let $\ket{\phi_k}$ be the output of the circuit $k$.
            \item Measure $\psi$ with $\{\ket{\psi_k}\bra{\psi_k},I-\ket{\psi_k}\bra{\psi_k}\}$\footnote{\taiga{I will add the explanation how to do this.}}.
            \item Output $\top$ if $\ket{\psi_k}\bra{\psi_k}$ is obtained, and outputs $\bot$ otherwise
        \end{itemize}
    \end{description}
    \cref{thm:pure_owsg_from_state_learn} follows from the following \cref{prop:cor_pure,prop:sec_pure}.
\end{proof}
\begin{proposition}\label{prop:cor_pure}
    $\Sigma$ satisfies perfect correctness.
\end{proposition}

\begin{proposition}\label{prop:sec_pure}
    $\Sigma$ satisfies $p$-security.
\end{proposition}
We omit the proof of \cref{prop:cor_pure}.
We prove \cref{prop:sec_pure}.

\begin{proof}[Proof of \cref{prop:sec_pure}]
For contradiction, assume that $\Sigma$ is not $p$-secure.
More formally, assume that there exists a QPT adversary $\cA$ and $t$ such that
\begin{align}
    \sum_{k,k^*}\Pr[k\la\keygen(1^\secp)]\Pr[k^*\la\cA(\psi_k^{\otimes t(\secp)})]\abs{\bra{\psi_k}\ket{\psi_{k^*}}}^2
    \geq 1-p(\secp)
\end{align}
for all sufficiently large security parameters $\secp\in\N$.
If we define 
\begin{align}
    G\seteq \{(k,k^*) : \abs{\bra{\psi_k}\ket{\psi_{k^*}}}^2\geq 1-\epsilon(\secp)\},
\end{align}
then we have
\begin{align}
1-p(\secp)&\leq 
\sum_{k,k^*}\Pr[k\la\keygen(1^\secp)]\Pr[k^*\la\cA(\ket{\psi_k}^{\otimes t})]\abs{\bra{\psi_k}\ket{\psi_{k^*}}}^2\\
&\leq \sum_{k,k^*\in G}\Pr[k\la\keygen(1^\secp)]\Pr[k^*\la\cA(\ket{\psi_k}^{\otimes t})]\abs{\bra{\psi_k}\ket{\psi_{k^*}}}^2\\
&+\sum_{k,k^*\notin G}\Pr[k\la\keygen(1^\secp)]\Pr[k^*\la\cA(\ket{\psi_k}^{\otimes t})]\abs{\bra{\psi_k}\ket{\psi_{k^*}}}^2\\
&\leq \sum_{k,k^*\in G}\Pr[k\la\keygen(1^\secp)]\Pr[k^*\la\cA(\ket{\psi_k}^{\otimes t})]+\left(1-\epsilon(\secp)\right)\sum_{k,k^*\notin G}\Pr[k\la\keygen(1^\secp)]\Pr[k^*\la\cA(\ket{\psi_k}^{\otimes t})]\\
&=\sum_{k,k^*\in G}\Pr[k\la\keygen(1^\secp)]\Pr[k^*\la\cA(\ket{\psi_k}^{\otimes t})]\\
&+\left(1-\epsilon(\secp)\right)\left(1-\sum_{k,k^*\in G}\Pr[k\la\keygen(1^\secp)]\Pr[k^*\la\cA(\ket{\psi_k}^{\otimes t})]\right)\\
&=\left(1-\epsilon(\secp)\right) +\epsilon(\secp)\sum_{k,k^*\in G}\Pr[k\la\keygen(1^\secp)]\Pr[k^*\la\cA(\ket{\psi_i}^{\otimes t(\secp)})].
\end{align}
Therefore, we have
\begin{align}
    1-\frac{p(\secp)}{\epsilon(\secp)}\leq \sum_{k,k^*\in G}\Pr[k\la\keygen(1^\secp)]\Pr[k^*\la\cA(\ket{\psi_k}^{\otimes t(\secp)})]
\end{align}

From $\cA$, we can construct a QPT learning algorithm $\cB$.
$\cB$ receives $\ket{\psi_k}^{\otimes t}$, runs $k^*\la\cA(\ket{\psi_{k}}^{\otimes t})$, and outputs $k^*$.
Now, we have
\begin{align}
    &\Pr[F(\ket{\phi_\secp},\ket{\psi_\secp})\geq 1-\epsilon(\secp)
        :
    \begin{array}{ll}
       S_\secp\la_{\cC_\secp} \cS_\secp  \\
       \ket{\psi_\secp}\la S_\secp\\
        C_\secp\la \cB(\ket{\psi_\secp}^{\otimes t(\secp)},1^\secp)\\
        \ket{\phi_\secp}\la C_\secp
    \end{array}
        ]\\
    &=\sum_{k,k^*}\Pr[k\la\keygen(1^\secp)]\Pr[k^*\la\cA(\ket{\psi_k}^{\otimes  t})]\Pr[F(\ket{\psi_k},\ket{\psi_{k^*}})\geq 1-\epsilon(\secp)]\\
    &\geq \sum_{k,k^*\in G}\Pr[k\la\keygen(1^\secp)]\Pr[k^*\la\cA(\ket{\psi_k}^{\otimes  t})]
    \geq 1-\frac{p(\secp)}{\epsilon(\secp)}\geq 1-\delta(\secp).
\end{align}
This contradicts that $\cS$ is not $(\epsilon,\delta)$-learnable on average over $\cC$.
Therefore, $\cA$ satisfies $(1-p)$-security.
\end{proof}
\color{black}
\fi


{\bf Acknowledgements.}
We appreciate Tomoyuki Morimae and Yuki Shirakawa for illuminating discussions.
TH is supported by 
JSPS research fellowship and by JSPS KAKENHI No. JP22J21864.

\ifnum\llncs=1
\bibliographystyle{alpha} 
\bibliography{abbrev3,crypto,reference}
\else
\bibliographystyle{alpha} 
\bibliography{abbrev3,crypto,reference}
\fi

\appendix

\ifnum\cameraready=1
\else
\appendix

	\ifnum\llncs=1
	\newpage
	 	\setcounter{page}{1}
 	{
	\noindent
 	\begin{center}
	{\Large SUPPLEMENTAL MATERIALS}
	\end{center}
 	}
	\setcounter{tocdepth}{2}
	\fi
\fi

\ifnum\cameraready=1
\else
\ifnum\submission=1
\newpage
\setcounter{tocdepth}{1}
\tableofcontents
\else
\fi
\fi

\end{document}